\providecommand{\U}[1]{\protect\rule{.1in}{.1in}}
\newtheorem{theorem}{Theorem}
\newtheorem{definition}[theorem]{Definition}
\newtheorem{proposition}[theorem]{Proposition}
\newtheorem{remark}[theorem]{Remark}
\newenvironment{proof}[1][Proof]{\noindent\textbf{#1.} }{\ \rule{0.5em}{0.5em}}
\begin{document}

\title{Concept of Lie Derivative of Spinor Fields. A Geometric Motivated Approach}
\author{Waldyr A. Rodrigues Jr., Rafael F. Le\~{a}o and Samuel A. Wainer\\IMECC-UNICAMP\\{\footnotesize \emph{~}walrod@ime.unicamp.br~~leao@ime.unicamp.br\emph{~}%
~samuelwainer@ime.unicamp.br}}
\date{November 30 2015}
\maketitle
\tableofcontents

\begin{abstract}
In this paper using the Clifford bundle ($\mathcal{C\ell}(M,\mathtt{g})$) and
spin-Clifford bundle ($\mathcal{C\ell}_{\mathrm{Spin}_{1,3}^{e}}%
(M,\mathtt{g})$) formalism, which permit to give a meaningfull representative
of a Dirac-Hestenes spinor field (even section of $\mathcal{C\ell
}_{\mathrm{Spin}_{1,3}^{e}}(M,\mathtt{g})$) in the Clifford bundle , we give a
geometrical motivated definition for the Lie derivative of spinor fields in a
Lorentzian structure $(M,\boldsymbol{g})$ where $M$ is a manifold such that
$\dim M=4$, $\boldsymbol{g}$ is Lorentzian of signature $(1,3)$. Our Lie
derivative, called the spinor Lie derivative (and denoted
$\overset{s}{\pounds }_{\boldsymbol{\xi}}$) is given by nice formulas when
applied to Clifford and spinor fields, and moreover $\overset{s}{\pounds }%
_{\boldsymbol{\xi}}\boldsymbol{g}=0$ for any vector field $\boldsymbol{\xi}$.
We compare our definitions and results with the many others appearing in
literature on the subject.

\end{abstract}

\section{Introduction}

Our principal aim in this paper is, using the Clifford and spin-Clifford
bundles formalism, to give a geometrical motivated definition for the Lie
derivative of spinor fields in a Lorentzian structure $(M,\boldsymbol{g})$,
that will be defined below.

In Section 2 we recall some key definitions and some propositions that will
suggest us how to define the\ spinor image\ of Clifford and spinor fields
using the spinor lifting of an integral curve of a vector field, as set in
Definition (\ref{SPLIFT}). In section 3 we expose the main problem concerning
the definition of an appropriate definition for the Lie derivative of spinor
fields. We them present a proposition that permit us to calculate the usual
Lie derivative ($\pounds _{\boldsymbol{\xi}}$) of a coterad basis in the
direction of an arbitrary vector field $\boldsymbol{\xi}$ in two different
ways, the usual way, and one of them making use of the concept of the spinor
lifting (Definition \ref{SPLIFT}) of an integral curve of $\boldsymbol{\xi}$
in $P_{Spin_{1,3}}(M,\mathtt{g})$. It is this way of obtaining the Lie
derivative of a cotetrad basis that suggested us to give in Section 4.1 a
geometrical motivated definition of spinor images of Clifford and spinor
fields introducing the spinor mapping $^{s}$\textrm{h}$_{t}$ and next to
define in Section 4.2 the \emph{spinor Lie derivative }(denoted
$\overset{s}{\pounds _{\boldsymbol{\xi}}}$)\emph{ }of Clifford and spinor
fields and them to calculate the explicit simple and nice formulas for those
objects. The idea of Section 5\ is to write the spinor Lie derivative in terms
of covariant derivatives in such way that we can relate our construction with
the literature, as it appears in \cite{lichnerowicz}. In particular we
evaluate in Section 5.1 the spinor Lie derivative of a representative of a
DHSF in local coordinates and in Section 5.2 we write the spinor Lie
derivative for covariant Dirac spinor fields. In Section 6 we show that
$\overset{s}{\pounds _{\boldsymbol{\xi}}}\boldsymbol{g}=0$, for any arbitrary
vector field $\boldsymbol{\xi}$. A definition of Lie derivative that
annihilates $\boldsymbol{g}$ has been given firstly by Bourguignon \&
Gauduchon \cite{bg1992}, but our approach is very different from the one used
by those authors. The main proposal of Section 7 is to comment on some
different approaches to the Lie derivative of spinor fields, with conflicting
views appearing in the literature, and how our geometrical approach intersects
these and present our future prospects. Finally, in Section 8 we present our conclusions.

\section{Preliminaries}

Here, $M$ refers\footnote{Unless, explicitly stated.} to a four dimensional,
real, connected, paracompact and non-compact manifold. We define a Lorentzian
manifold as a pair $(M,\boldsymbol{g})$, where $\boldsymbol{g}\in\sec
T_{2}^{0}M$ is a Lorentzian metric of signature $(1,3)$, i.e., $\forall x\in
M,T_{x}M\simeq T_{x}^{\ast}M\simeq\mathbb{R}^{1,3}$, where $\mathbb{R}^{1,3}$
is the Minkowski vector space. We define a Lorentzian spacetime $M$ as
pentuple $(M,\boldsymbol{g},\boldsymbol{D},\tau_{\boldsymbol{g}},\uparrow)$,
where $(M,\boldsymbol{g},\tau_{\boldsymbol{g}},\uparrow)$) is an oriented
Lorentzian manifold (oriented by $\tau_{\boldsymbol{g}}$) and time oriented by
$\uparrow$, and $\boldsymbol{D}$ is the Levi-Civita connection of
$\boldsymbol{g}$. Let $\mathcal{U}\subseteq M$ be an open set covered by
coordinates $\{x^{\mu}\}$. Let $\{e_{\mu}=\partial_{\mu}\}$ be a coordinate
basis of $T\mathcal{U}$ and $\{\boldsymbol{\vartheta}^{\mu}=dx^{\mu}\}$ the
dual basis on $T^{\ast}\mathcal{U}$, i.e., $\boldsymbol{\vartheta}^{\mu
}(\partial_{\nu})=\delta_{\nu}^{\mu}$. If $\boldsymbol{g}=g_{\mu\nu
}\boldsymbol{\vartheta}^{\mu}\otimes\boldsymbol{\vartheta}^{\nu}$ is the
metric on $T\mathcal{U}$ we denote by $\mathtt{g}=g^{\mu\nu}%
\boldsymbol{\partial}_{\mu}\otimes\boldsymbol{\partial}_{\nu}$ the metric of
$T^{\ast}\mathcal{U}$, such that $g^{\mu\rho}g_{\rho\nu}=\delta_{\nu}^{\mu}$.
We introduce also $\{\boldsymbol{\partial}^{\mu}\}$ and
$\{\boldsymbol{\vartheta}_{\mu}\}$, respectively, as the reciprocal bases of
$\{e_{\mu}\}$ and $\{\boldsymbol{\vartheta}_{\mu}\}$, i.e., we have
\begin{equation}
\boldsymbol{g}(\boldsymbol{\partial}_{\nu},\boldsymbol{\partial}^{\mu}%
)=\delta_{\nu}^{\mu},~~~\mathtt{g}(\boldsymbol{\vartheta}^{\mu}%
,\boldsymbol{\vartheta}_{\nu})=\delta_{\nu}^{\mu}. \label{p1}%
\end{equation}

In what follows $\mathbf{P}_{\mathrm{SO}_{1,3}^{e}}(M,\boldsymbol{g})$
($P_{\mathrm{SO}_{1,3}^{e}}(M,\mathtt{g})$) denotes the principal bundle of
oriented Lorentz tetrads (cotetrads).

\begin{definition}
A spin structure for a general $m$-dimensional manifold $M$ consists of a
principal fiber bundle $\pi_{s}:P_{\mathrm{Spin}_{p,q}^{e}}(M,\mathtt{g}%
)\rightarrow M$, \emph{(}called the Spin Frame Bundle\emph{)} with group
$\mathrm{Spin}_{p,q}^{e}$ and a map%
\begin{equation}
\Lambda:P_{\mathrm{Spin}_{p,q}^{e}}(M,\mathtt{g})\rightarrow P_{\mathrm{SO}%
_{p,q}^{e}}(M,\mathtt{g}),
\end{equation}
satisfying the following conditions:

\begin{description}
\item[(i)] $\pi(\Lambda(p))=\pi_{s}(p),\forall p\in P_{\mathrm{Spin}_{p,q}%
^{e}}(M,\mathtt{g})$, where $\pi$ is the projection map of the bundle
$\pi:P_{\mathrm{SO}_{p,q}^{e}}(M,\mathtt{g})\rightarrow M$.

\item[(ii)] $\Lambda(pu)=\Lambda(p)\mathrm{Ad}_{u},\forall p\in
P_{\mathrm{Spin}_{p,q}^{e}}(M,\mathtt{g})$ and $\mathrm{Ad}:\mathrm{Spin}%
_{p,q}^{e}\rightarrow\mathrm{SO}_{p,q}^{e}, \break\mathrm{Ad}_{u}%
(a)=uau^{-1}.$
\end{description}
\end{definition}

\begin{definition}
Any section of $P_{\mathrm{Spin}_{p,q}^{e}}(M,\mathtt{g})$ is called a spin
frame field \emph{(}or simply a spin frame\emph{)}. We shall use the symbol
$\Xi\in\sec P_{\mathrm{Spin}_{p,q}^{e}}(M,\mathtt{g})$ to denoted a spin frame.
\end{definition}

In this work we will assume that exists a spin structure on \ the
4-dimensional Lorentzian manifold $(M,\boldsymbol{g})$, what implies that $M$
is parallelizable, i.e., $P_{\mathrm{SO}_{1,3}^{e}}(M,\mathtt{g})$ is trivial,
because of the following result:

\begin{theorem}
\label{gerochh}For a $4-$dimensional Lorentzian manifold$\ (M,\boldsymbol{g}%
)$, a spin structure exists if and only if $P_{\mathrm{SO}_{1,3}^{e}%
}(M,\mathtt{g})$ is a trivial bundle.
\end{theorem}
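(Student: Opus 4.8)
The plan is to prove both implications of this classical result of Geroch, disposing of the trivial direction quickly and concentrating on the topological content of the converse. First I would establish the easy direction: if $P_{\mathrm{SO}_{1,3}^{e}}(M,\mathtt{g})$ is trivial, then a spin structure exists. Triviality gives a bundle isomorphism $P_{\mathrm{SO}_{1,3}^{e}}(M,\mathtt{g})\simeq M\times\mathrm{SO}_{1,3}^{e}$, so I would set $P_{\mathrm{Spin}_{1,3}^{e}}(M,\mathtt{g}):=M\times\mathrm{Spin}_{1,3}^{e}$ and define $\Lambda$ on each fibre by the covering homomorphism $\mathrm{Ad}:\mathrm{Spin}_{1,3}^{e}\rightarrow\mathrm{SO}_{1,3}^{e}$, checking directly that conditions (i) and (ii) of the definition hold. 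This produces a spin structure with essentially no work.

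For the converse I would argue at the level of characteristic classes. The standard obstruction-theoretic fact is that a spin structure on an oriented bundle exists if and only if its second Stiefel--Whitney class vanishes, so the existence of a spin structure yields $w_{2}(TM)=0$. Because $(M,\boldsymbol{g})$ is an oriented and time-oriented Lorentzian spacetime, the tangent bundle splits orthogonally as $TM\simeq L\oplus E$, where $L$ is the trivial real line bundle generated by a global unit timelike field and $E$ is an oriented rank-three Riemannian bundle carrying the residual $\mathrm{SO}(3)$ structure. Since $L$ is trivial, $w(TM)=w(E)$ and hence $w_{2}(E)=0$; moreover, because $\mathrm{SO}_{1,3}^{e}$ deformation-retracts onto its maximal compact subgroup $\mathrm{SO}(3)$, the triviality of $P_{\mathrm{SO}_{1,3}^{e}}(M,\mathtt{g})$ is equivalent to that of the associated $\mathrm{SO}(3)$-bundle $E$. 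It therefore suffices to show that $E$ is trivial.

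The main obstacle---and the point where the non-compactness hypothesis enters essentially---is passing from $w_{2}(E)=0$ to the actual triviality of $E$. Here I would invoke that a connected non-compact $n$-manifold has the homotopy type of a CW complex of dimension at most $n-1$, so $M$ is homotopy equivalent to a $3$-complex $K$. A trivialization of $E$ is a global section of its frame bundle, and the successive obstructions to such a section over the skeleta of $K$ lie in $H^{i+1}(K;\pi_{i}(\mathrm{SO}(3)))$. The obstruction over the $1$-skeleton vanishes since $\mathrm{SO}(3)$ is connected, the obstruction over the $2$-skeleton is precisely $w_{2}(E)\in H^{2}(K;\pi_{1}(\mathrm{SO}(3)))\cong H^{2}(K;\mathbb{Z}_{2})$, and the obstruction over the $3$-skeleton lies in $H^{3}(K;\pi_{2}(\mathrm{SO}(3)))=0$ because $\pi_{2}(\mathrm{SO}(3))=0$. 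Thus, once $w_{2}(E)=0$, no obstruction survives over a complex of dimension at most three, forcing $E$---and hence $P_{\mathrm{SO}_{1,3}^{e}}(M,\mathtt{g})$---to be trivial. I expect the delicate points to be the careful justification of the homotopy reduction to a $3$-complex and the obstruction computation of $\pi_{i}(\mathrm{SO}(3))$, rather than any of the bundle-theoretic bookkeeping.
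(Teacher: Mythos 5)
Your proposal is correct in outline, but note that the paper itself offers no argument: its ``proof'' is the single line ``See Geroch \cite{geroch}'', so the only thing to compare against is Geroch's original argument, and yours genuinely differs from it. Geroch never passes through characteristic classes. He works directly with the spin bundle: since $\mathrm{Spin}_{1,3}^{e}\simeq\mathrm{SL}(2,\mathbb{C})$ deformation-retracts onto $\mathrm{SU}(2)\simeq S^{3}$, the fibre is $2$-connected, so over a complex of dimension $\leq 3$ every principal $\mathrm{Spin}_{1,3}^{e}$-bundle is trivial (the obstructions lie in $H^{i+1}(K;\pi_{i}(S^{3}))=0$ for $i\leq2$, and there is no $H^{4}$); a global section of $P_{\mathrm{Spin}_{1,3}^{e}}(M,\mathtt{g})$ then pushes down through $\Lambda$ to a global section of $P_{\mathrm{SO}_{1,3}^{e}}(M,\mathtt{g})$, i.e.\ a global tetrad field, which is exactly triviality. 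Your route instead converts the existence of a spin structure into $w_{2}(TM)=0$, splits $TM\simeq L\oplus E$, and runs obstruction theory on the frame bundle of the rank-$3$ bundle $E$. Both arguments use non-compactness in precisely the same place (replacing $M$ by a homotopy $3$-complex) and both terminate with $\pi_{2}$ of a compact group vanishing, but Geroch's is more economical: it needs neither the equivalence ``spin structure exists $\Leftrightarrow w_{2}=0$'' (which in Lorentzian signature is itself usually proved via the maximal-compact reduction you invoke anyway) nor the identification of the primary obstruction with $w_{2}$. What your version buys is the explicit appearance of $w_{2}$ and a reusable intermediate statement (an oriented rank-$3$ bundle with vanishing $w_{2}$ over a $3$-complex is trivial). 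Two points to tighten: the splitting $TM\simeq L\oplus E$ requires time-orientability, which is implicit in the paper's standing assumption that the structure group is the connected group $\mathrm{SO}_{1,3}^{e}$, so you should say so explicitly; and if you wish to avoid the (true but not free) fact that an open smooth $n$-manifold has the homotopy type of an $(n-1)$-complex, you can run the obstruction theory on $M$ itself and use instead $H^{4}(M;\pi_{3}(\mathrm{SO}(3)))=H^{4}(M;\mathbb{Z})=0$, valid for any connected non-compact $4$-manifold.
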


\begin{proof}
See Geroch \cite{geroch}
\end{proof}

The Clifford bundle of differential forms of a Lorentzian manifold
$(M,\boldsymbol{g})$ is the bundle of algebras $\mathcal{C}\ell(M,\mathtt{g}%
)=\bigsqcup\nolimits_{x\in M}\mathcal{C}\ell(T_{x}^{\ast}M,\mathtt{g}_{x}).$

We know that\footnote{Where $Ad:\mathrm{Spin}_{1,3}^{e}\rightarrow
\mathrm{End(}\mathbb{R}_{1,3}\mathrm{)}$\textrm{ } is such that
$Ad(u)a=uau^{-1}$. And $\rho:\mathrm{SO}_{1,3}^{e}\rightarrow\mathrm{End(}%
\mathbb{R}_{1,3}\mathrm{)}$ is the natural action of $\mathrm{SO}_{1,3}^{e}$
on $\mathbb{R}_{1,3}$.} \cite{rc2007}:%
\begin{equation}
\mathcal{C}\ell(M,\mathtt{g})=P_{\mathrm{SO}_{1,3}^{e}}(M,\mathtt{g}%
)\times_{\rho}\mathbb{R}_{1,3}=P_{\mathrm{Spin}_{1,3}^{e}}(M,\mathtt{g}%
)\times_{Ad}\mathbb{R}_{1,3}, \label{eq_cliff}%
\end{equation}
and since\footnote{Given the objets $A$ and $B$, $A$ $\hookrightarrow$ $B$
means as usual that $A$ is embedded in $B$ and moreover, $A\subseteq B$. In
particular, recall that there is a canonical vector space isomorphism between
$\bigwedge\mathbb{R}^{1,3}$ and $\mathbb{R}_{1,3}$, which is written
$\bigwedge\mathbb{R}^{1,3}\hookrightarrow\mathbb{R}_{1,3}$. Details in
\cite{cru,lawmi}.} $\bigwedge TM\hookrightarrow\mathcal{C}\ell(M,\mathtt{g})$,
sections of $\mathcal{C}\ell(M,\mathtt{g})$ (the Clifford fields) can be
represented as a sum of non homogeneous differential forms.

Next (using that $M$ is parallelizable) we introduce the global tetrad basis
$\{\boldsymbol{e}_{\alpha}\}$ on $TM$ and in $T^{\ast}M$ the cotetrad basis on
$\{\boldsymbol{\gamma}^{\alpha}\}$, which are dual basis. We introduce the
reciprocal basis $\{\boldsymbol{e}^{\alpha}\}$ and $\{\boldsymbol{\gamma
}_{\alpha}\}$ of $\{\boldsymbol{e}_{\alpha}\}$ and $\{\boldsymbol{\gamma
}^{\alpha}\}$ satisfying
\begin{equation}
\boldsymbol{g}(\boldsymbol{e}_{\alpha},\boldsymbol{e}^{\beta})=\delta_{\alpha
}^{\beta},~~~\mathtt{g}(\boldsymbol{\gamma}^{\beta},\boldsymbol{\gamma
}_{\alpha})=\delta_{\alpha}^{\beta}. \label{p2}%
\end{equation}

Moreover, recall that\footnote{Where the matrix with entries $\eta
_{\alpha\beta}$ (or $\eta^{\alpha\beta}$) is the diagonal matrix
$(1,-1,-1,-1)$.}%
\begin{equation}
\boldsymbol{g}=\eta_{\alpha\beta}\boldsymbol{\gamma}^{\alpha}\otimes
\boldsymbol{\gamma}^{\beta}=\eta^{\alpha\beta}\boldsymbol{\gamma}_{\alpha
}\otimes\boldsymbol{\gamma}_{\beta},~~~\mathtt{g}=\eta^{\alpha\beta
}\boldsymbol{e}_{\alpha}\otimes\boldsymbol{e}_{\beta}=\eta_{\alpha\beta
}\boldsymbol{e}^{\alpha}\otimes\boldsymbol{e}^{\beta}. \label{P3}%
\end{equation}

To present our results on the Lie derivatives of spinor fields we need to
recall some other definitions, which serve also to fix our notation:

Recalling that $\mathrm{Spin}_{1,3}^{e}\hookrightarrow\mathbb{R}_{1,3}^{0}$,
we give:

\begin{definition}
The left \emph{(}respectively right\emph{)} real spin-Clifford bundle of the
spin manifold $M$ is the vector bundle $\mathcal{C}\ell_{\mathrm{Spin}}%
^{l}(M,\mathtt{g})=P_{\mathrm{Spin}_{1,3}^{e}}(M,\mathtt{g})\times
_{l}\mathbb{R}_{1,3}$ \emph{(}respectively $\mathcal{C}\ell_{\mathrm{Spin}%
}^{r}(M,\mathtt{g})=P_{\mathrm{Spin}_{1,3}^{e}}(M,\mathtt{g})\times
_{r}\mathbb{R}_{1,3}$\emph{)} where $l$ is the representation of
$\mathrm{Spin}_{1,3}^{e}$ on $\mathbb{R}_{1,3}$ given by $l(a)x=ax$
\emph{(}respectively, where $r$ is the representation of $\mathrm{Spin}%
_{1,3}^{e}$ on $\mathbb{R}_{1,3}$ given by $r(a)x=xa^{-1}$\emph{)}. Sections
of $\mathcal{C}\ell_{\mathrm{Spin}}^{l}(M,\mathtt{g})$ are called left
spin-Clifford fields \emph{(}respectively right spin-Clifford fields\emph{)}.
\end{definition}

\begin{definition}
Let $\mathbf{e^{l}}\in\mathcal{C}\ell_{\mathrm{Spin}_{1,3}^{e}}^{l}%
(M,\mathtt{g})$ be a primitive global idempotent\footnote{We know that global
primitive idempotents exist because $M$ is parallelizable.}, respectively
$\mathbf{e^{r}}\in\mathcal{C}\ell_{\mathrm{Spin}_{1,3}^{e}}^{r}(M,\mathtt{g}%
)$, and let $I^{l}(M,\mathtt{g})$\ and $I^{r}(M,\mathtt{g})$ be the subbundles
of $\mathcal{C}\ell_{\mathrm{Spin}_{1,3}^{e}}^{l}(M,\mathtt{g})$ and
$\mathcal{C}\ell_{\mathrm{Spin}_{1,3}^{e}}^{r}(M,\mathtt{g})$ generated by
these idempotents, that is, if $\mathbf{\Psi}$ is a section of $I^{l}%
(M,\mathtt{g})\subset\mathcal{C}\ell_{\mathrm{Spin}_{1,3}^{e}}^{l}%
(M,\mathtt{g})$, and $\mathbf{\Phi}$ is a section of $I^{r}(M,\mathtt{g}%
)\subset\mathcal{C}\ell_{\mathrm{Spin}_{1,3}^{e}}^{r}(M,\mathtt{g})$, we have%

\begin{equation}
\mathbf{\Psi e^{l}}=\mathbf{\Psi},~~~~~~\mathbf{e^{r} \Phi}=\mathbf{\Phi}.
\end{equation}

A section $\mathbf{\Psi}$ of $I^{l}(M,\mathtt{g})$ \emph{(}respectively
$\mathbf{\Phi}$ of $I^{r}(M,\mathtt{g})$\emph{)} is called a left
\emph{(}respectively right\emph{)} ideal algebraic spinor field.
\end{definition}

\begin{definition}
A Dirac-Hestenes spinor field \emph{(DHSF)} associated with $\mathbf{\Psi}$ is
a section\footnote{$\mathcal{C}\ell_{\mathrm{Spin}_{1,3}^{e}}^{0l}
(M,\mathtt{g})$ (respectively $\mathcal{C}\ell_{\mathrm{Spin}_{1,3}^{e}}%
^{0r}(M,\mathtt{g})$) denotes the even subbundle of $\mathcal{C}%
\ell_{\mathrm{Spin}_{1,3}^{e}}^{l}(M,\mathtt{g})$ (respectively $\mathcal{C}%
\ell_{\mathrm{Spin}_{1,3}^{e}}^{r}(M,\mathtt{g})$).} $\varPsi$ of
$\mathcal{C}\ell_{\mathrm{Spin}_{1,3}^{e}}^{0l}(M,\mathtt{g})\subset
\mathcal{C}\ell_{\mathrm{Spin}_{1,3}^{e}}^{l}(M,\mathtt{g})$ \emph{(}%
respectively a section $\varPhi$ of \break$\mathcal{C}\ell_{\mathrm{Spin}%
_{1,3}^{e}}^{0r}(M,\mathtt{g})\subset\mathcal{C}\ell_{\mathrm{Spin}_{1,3}^{e}%
}^{r}(M,\mathtt{g})$\emph{)} such that\footnote{For any $\mathbf{\Psi}$ the
DHSF always exist, see \cite{rc2007}.}
\begin{equation}
\mathbf{\Psi}=\varPsi\mathbf{e^{l}},~~~~~~~\mathbf{\Phi}=\mathbf{e^{r}%
}\varPhi. \label{DHSF}%
\end{equation}

\end{definition}

\begin{definition}
There are natural pairings:%
\begin{align}
\sec\mathcal{C}\ell_{\mathrm{Spin}_{1,3}^{e}}^{l}(M,\mathtt{g})\times
\sec\mathcal{C}\ell_{\mathrm{Spin}_{1,3}^{e}}^{r}(M,\mathtt{g})  &
\rightarrow\sec\mathcal{C}\ell(M,\mathtt{g}),\label{PAIRING}\\
\sec\mathcal{C}\ell_{\mathrm{Spin}_{1,3}^{e}}^{r}(M,\mathtt{g})\times
\sec\mathcal{C}\ell_{\mathrm{Spin}_{1,3}^{e}}^{l}(M,\mathtt{g})  &
\rightarrow\mathcal{F(}M,\mathbb{R}_{1,3}),
\end{align}
such that given a section $\alpha$ of $\mathcal{C}\ell_{\mathrm{Spin}%
_{1,3}^{e}}^{l}(M,\mathtt{g})$ and a section $\beta$ of $\mathcal{C}%
\ell_{\mathrm{Spin}_{1,3}^{e}}^{r}(M,\mathtt{g})$ and selecting
representatives $(p,a)$ for $\alpha(x)$ and $(p,b)$ for $\beta(x)$
\emph{(}$p\in\pi^{-1}\left(  x\right)  $) it is%
\begin{align}
(\alpha\beta)  &  :=[(p;ab)]\in\mathcal{C}\ell(M,\mathtt{g}),\\
(\beta\alpha)(x)  &  :=ba\in\mathbb{R}_{1,3}.
\end{align}
If alternative representatives $(pu^{-1},ua)$ and $(pu^{-1},bu^{-1})$ are
chosen for $\alpha(x)$ and $\beta(x)$ we have $[(pu^{-1};uabu^{-1})]$, that,
by \ref{eq_cliff}, represents the same element on $\mathcal{C}\ell
(M,\mathtt{g})$, and $(bu^{-1}ua)=ba$; thus $(\alpha\beta)(x)$ and
$(\beta\alpha)(x)$ are a well defined. Following the same procedure we could
define the actions \cite{rc2007}:%
\begin{align}
\sec\mathcal{C}\ell(M,\mathtt{g})\times\sec\mathcal{C}\ell_{\mathrm{Spin}%
_{1,3}^{e}}^{l}(M,\mathtt{g})  &  \rightarrow\sec\mathcal{C}\ell
_{\mathrm{Spin}_{1,3}^{e}}^{l}(M,\mathtt{g}),\\
\sec\mathcal{C}\ell_{\mathrm{Spin}_{1,3}^{e}}^{r}(M,\mathtt{g})\times
\sec\mathcal{C}\ell(M,\mathtt{g})  &  \rightarrow\sec\mathcal{C}%
\ell_{\mathrm{Spin}_{1,3}^{e}}^{r}(M,\mathtt{g}),\\
\sec\mathcal{C}\ell_{\mathrm{Spin}_{1,3}^{e}}^{l}(M,\mathtt{g})\times
\mathbb{R}_{1,3}  &  \rightarrow\sec\mathcal{C}\ell_{\mathrm{Spin}_{1,3}^{e}%
}^{l}(M,\mathtt{g}),\\
\mathbb{R}_{1,3}\times\sec\mathcal{C}\ell_{\mathrm{Spin}_{1,3}^{e}}%
^{r}(M,\mathtt{g})  &  \rightarrow\sec\mathcal{C}\ell_{\mathrm{Spin}_{1,3}%
^{e}}^{r}(M,\mathtt{g}).
\end{align}

\end{definition}

Given a local trivialization of $\mathcal{C}\ell(M,\mathtt{g})$ (or
$\mathcal{C}\ell_{\mathrm{Spin}_{1,3}^{e}}^{l}(M,\mathtt{g})$, $\mathcal{C}%
\ell_{\mathrm{Spin}_{1,3}^{e}}^{r}(M,\mathtt{g})$) ($\mathcal{U}\subset M$)%
\begin{equation}
\phi_{U}:\pi^{-1}(\mathcal{U})\rightarrow\mathcal{U}\times\mathbb{R}_{1,3},
\end{equation}
we can define a local unit section by $\mathbf{1}_{U}(x)=\phi_{U}^{-1}(x,1)$.
For $\mathcal{C}\ell(M,\mathtt{g})$, it is easy to show that a global unit
section always exist, independently of the fact that $M$ \ is parallizable or
not. For the bundles $\mathcal{C}\ell_{\mathrm{Spin}_{p,q}^{e}}^{l}%
(M,\mathtt{g})$, $\mathcal{C}\ell_{\mathrm{Spin}_{p,q}^{e}}^{r}(M,\mathtt{g}%
)$) ($\dim$ $M=p+q$) there exist a global unit sections if, and only if,
$P_{\mathrm{Spin}_{p,q}^{e}}(M,\mathtt{g})$ is trivial \cite{r2004,rc2007}. In
our case we know, by Geroch theorem, that $M$ is parallelizable and we can
define global unit sections on $\mathcal{C}\ell_{\mathrm{Spin}_{1,3}^{e}}%
^{l}(M,\mathtt{g})$ and $\mathcal{C}\ell_{\mathrm{Spin}_{1,3}^{e}}%
^{r}(M,\mathtt{g})$.

Let $\boldsymbol{\mathbf{\Xi}}$$_{u}$ be a section of $P_{\mathrm{Spin}%
_{1,3}^{e}}(M,\mathtt{g})$, i.e., a spin frame. We recall, in order to fix
notations, that sections of $\mathcal{C}\ell(M,\mathtt{g})$ $I^{l}%
(M,\mathtt{g})$, $I^{r}(M,\mathtt{g})$ $\mathcal{C}\ell_{\mathrm{Spin}%
_{1,3}^{e}}^{l}(M,\mathtt{g})$, $\mathcal{C}\ell_{\mathrm{Spin}_{1,3}^{e}}%
^{r}(M,\mathtt{g})$ are, respectively, the equivalence classes%
\begin{gather}
\boldsymbol{C}=[(\boldsymbol{\mathbf{\Xi}}_{u},\mathcal{C}%
_{\boldsymbol{\mathbf{\Xi}}_{u}})],\nonumber\\
\mathbf{\Psi}=[(\boldsymbol{\mathbf{\Xi}}_{u},\boldsymbol{\Psi}%
_{\boldsymbol{\mathbf{\Xi}}_{u}})]~~~\mathbf{\Phi}=[(\boldsymbol{\mathbf{\Xi}%
}_{u},\boldsymbol{\Phi}_{\boldsymbol{\mathbf{\Xi}}_{u}}%
)],~~~\varPsi =[(\boldsymbol{\mathbf{\Xi}}_{u}%
,\varPsi_{\boldsymbol{\mathbf{\Xi}}_{u}}%
)],~~~\varPhi=[(\boldsymbol{\mathbf{\Xi}}_{u},\varPhi_{\boldsymbol{\mathbf{\Xi
}}_{u}})]. \label{notation}%
\end{gather}

\begin{remark}
When convenient, we will write $\mathcal{C}_{\Xi_{u}}\in\sec\mathcal{C}%
\ell(M,\mathtt{g})$ to mean that there exists a section $\boldsymbol{C}$ of
the Clifford bundle $\mathcal{C}\ell(M,\mathtt{g})$ defined by
$[(\boldsymbol{\mathbf{\Xi}}_{u},\mathcal{C}_{\boldsymbol{\mathbf{\Xi}}_{u}%
})]$. Analogous notations will be used for sections of the other bundles
introduced above. Also, when there is no chance of confusion on the chosen
spinor frame, we will write $\mathcal{C}_{\boldsymbol{\mathbf{\Xi}}_{u}}$
simply as $\mathcal{C}$.
\end{remark}

For each spin frame, say $\boldsymbol{\mathbf{\Xi}}$$_{0}$, let $\mathbf{1}%
_{\boldsymbol{\mathbf{\Xi}}_{0}}^{l}$ and $\mathbf{1}_{\boldsymbol{\mathbf{\Xi
}}_{0}}^{r}$ be the global unit sections of $\mathcal{C\ell}_{\mathrm{Spin}%
_{1,3}^{e}}^{l}(M,\mathtt{g})$ and $\mathcal{C\ell}_{\mathrm{Spin}_{1,3}^{e}%
}^{r}(M,\mathtt{g})$, given by%

\begin{equation}
\mathbf{1}_{\boldsymbol{\mathbf{\Xi}}_{0}}^{r}:=[(\Xi_{0},1)],~~~~~\mathbf{1}%
_{\boldsymbol{\mathbf{\Xi}}_{0}}^{l}:=[(\Xi_{0},1)]. \label{repre3}%
\end{equation}

\begin{remark}
Before proceeding note that given another spin frame $\Xi_{u}=\Xi_{0}u$, where
$u:M\rightarrow\mathrm{Spin}_{1,3}^{e}\subset\mathbb{R}_{1,3}^{0}%
\subset\mathbb{R}_{1,3}$ we define the sections $\mathbf{1}%
_{\boldsymbol{\mathbf{\Xi}}_{u}}^{r}$ of $\mathcal{C\ell}_{\mathrm{Spin}%
_{1,3}^{e}}^{r}(M,\mathtt{g})$ and $\mathbf{1}_{\boldsymbol{\mathbf{\Xi}}_{0}%
}^{l}$ of $\mathcal{C\ell}_{\mathrm{Spin}_{1,3}^{e}}^{l}(M,\mathtt{g})$ by
\begin{equation}
\mathbf{1}_{\boldsymbol{\mathbf{\Xi}}_{u}}^{r}:=[(\Xi_{u},1)],~~~~~\mathbf{1}%
_{\boldsymbol{\mathbf{\Xi}}_{u}}^{l}:=[(\Xi_{u},1)]. \label{rep4}%
\end{equation}

It has been proved in \emph{\cite{r2004,rc2007}} that the relation between
$\mathbf{1}_{\boldsymbol{\mathbf{\Xi}}}^{r}$ and $\mathbf{1}%
_{\boldsymbol{\mathbf{\Xi}}_{0}}^{r}$ and between $\mathbf{1}%
_{\boldsymbol{\mathbf{\Xi}}}^{l}$ and $\mathbf{1}_{\boldsymbol{\mathbf{\Xi}%
}_{0}}^{l}$\ are given by
\begin{equation}
\mathbf{1}_{\boldsymbol{\mathbf{\Xi}}_{u}}^{r}=u^{-1}\mathbf{1}%
_{\boldsymbol{\mathbf{\Xi}}_{0}}^{r}=\mathbf{1}_{\boldsymbol{\mathbf{\Xi}}%
_{0}}^{r}U^{-1},~~~~~\mathbf{1}_{\boldsymbol{\mathbf{\Xi}}_{u}}^{l}%
=U\mathbf{1}_{\boldsymbol{\mathbf{\Xi}}_{0}}^{l}=\mathbf{1}%
_{\boldsymbol{\mathbf{\Xi}}_{0}}^{l}u \label{rep5}%
\end{equation}
where $U$ is the section of $\mathcal{C}\ell(M,\mathtt{g})$ defined by the
equivalence class
\begin{equation}
U=[(\Xi_{0},u)]. \label{rep6}%
\end{equation}

\end{remark}

The unity sections $\mathbf{1}_{\boldsymbol{\mathbf{\Xi}}_{u}}^{l}$ and
$\mathbf{1}_{\boldsymbol{\mathbf{\Xi}}_{u}}^{r}$satisfies the important
relations\footnote{$\mathcal{C}\ell^{0}(M,\mathtt{g})$ denotes the even
subbundle of $\mathcal{C}\ell(M,\mathtt{g})$.}%

\begin{equation}
\mathbf{1}_{\boldsymbol{\mathbf{\Xi}}_{u}}^{l}\mathbf{1}%
_{\boldsymbol{\mathbf{\Xi}}_{u}}^{r}=1\in\sec\mathcal{C}\ell(M,\mathtt{g}%
),~\mathbf{1}_{\boldsymbol{\mathbf{\Xi}}_{u}}^{r}\mathbf{1}%
_{\boldsymbol{\mathbf{\Xi}}_{u}}^{l}=1\in\mathcal{F(}M,\mathbb{R}_{1,3}),
\label{rep6a}%
\end{equation}

\begin{definition}
A representative of a DHSF $\varPsi$ \emph{(}respectively $\varPhi$\emph{)} in
the Clifford bundle $\mathcal{C}\ell(M,\mathtt{g})$ relative to a spin frame
$\mathbf{\Xi}_{u}$ is a section $\boldsymbol{\psi}_{\mathbf{\Xi}_{u}%
}=[(\mathbf{\Xi}_{u},\psi_{\mathbf{\Xi}_{u}})]$ of $\mathcal{C}\ell
^{0}(M,\mathtt{g})$ \emph{(}respectively, a section\emph{ }$\boldsymbol{\phi
}_{\mathbf{\Xi}_{u}}=[(\mathbf{\Xi}_{u},\phi_{\mathbf{\Xi}_{u}})]$ of
$\mathcal{C}\ell^{0}(M,\mathtt{g})$\emph{) }given by \emph{\cite{r2004,rc2007}%
}
\end{definition}

\
\begin{equation}
\boldsymbol{\psi}_{\mathbf{\Xi}_{u}}=\Psi\mathbf{1}_{\mathbf{\Xi}_{u}}%
^{r},~~~~\mathbf{1}_{\mathbf{\Xi}_{u}}^{l}\Phi=\boldsymbol{\phi}_{\mathbf{\Xi
}_{u}}. \label{RDHSF}%
\end{equation}
Representatives in the Clifford bundle of $\varPsi$ relative to spin frames,
say $\mathbf{\Xi}_{u^{\prime}}$ and $\mathbf{\Xi}_{u}$, are related
by\footnote{This relation has been used in \cite{mr02014} to define a DHSF as
an appropriate equivalence class of even sections of the Clifford bundle
$\mathcal{C\ell}(M,\mathtt{g})$.}%

\begin{equation}
\boldsymbol{\psi}_{\mathbf{\Xi}_{u^{\prime}}}U^{\prime-1}=\boldsymbol{\psi
}_{\mathbf{\Xi}_{u}}U^{-1}. \label{30}%
\end{equation}

Also, representatives in the Clifford bundle of\ $\varPhi$ relative to spin
frames $\mathbf{\Xi}_{u^{\prime}}$ and $\mathbf{\Xi}_{u}$ are related by%

\begin{equation}
U^{\prime}\boldsymbol{\phi}_{\mathbf{\Xi}_{u^{\prime}}}=U\boldsymbol{\phi
}_{\mathbf{\Xi}_{u}}.
\end{equation}

\section{On the Concept of Lie Derivatives of Spinor Fields in Lorentzian
Manifolds}

Lie derivatives of tensor fields are defined once we give the concept of the
push forward and pullback mappings (which serves the purpose of defining the
image of the tensor field) associated to one-parameter groups of
diffeomorphisms generated by vector fields. These concepts are well known and
very important in the derivation of conserved currents in physical theories.

It happens that physical theories need also the concept of spinor fields
living on a Lorentzian manifold and the question arises as how to define a
meaningful image for these objects under a diffeomorphism. There are a lot of
different approaches to the subject, as the reader can learn consulting, e.g.,
\cite{bt1987,bm,bg1992,cru,dabrowski,F1,F,gm1,gm2,gursey,hv2000,kosmann0,kosmann1,kosmann2,kosmann3,lichnerowicz,pr2,weinberg}%
. In what follows using the definition of left (and right) \emph{real }spinor
fields (in particular, Dirac-Hestenes spinor fields)
\cite{r2004,mr2004,rc2007} living in a Lorentzian manifold and their
representatives in the Clifford bundle we give a geometrical motivated
definition for their images and a corresponding definition of the Lie
derivative for spinor and Clifford fields. We compare our definition with some
others appearing in the literature.

We already recalled that fixing a global spinor\ basis\footnote{Such a basis
must exists according to Geroch Theorem (\ref{gerochh}) \cite{geroch}.}
\textbf{$\Xi$}$_{0}(x)=(x,u_{0}(x))$ for $P_{\mathrm{Spin}_{1,3}^{e}%
}(M,\boldsymbol{g}$), and given an algebraic spinor $\mathbf{\Psi}$, the
associated DHSF $\varPsi$ can be represented in the Clifford bundle by the
object\footnote{The notation $\boldsymbol{P}\in\sec%
{\textstyle\bigwedge\nolimits^{p}}
T^{\ast}M\hookrightarrow\sec\mathcal{C}\ell(M,\mathtt{g})$ means that there
exits a section $s_{P}$ of $%
{\textstyle\bigwedge\nolimits^{p}}
T^{\ast}M\hookrightarrow\mathcal{C}\ell(M,\mathtt{g})$ given by or any $x\in
M$ by the equivalence class $[(\Xi_{0}(x),\boldsymbol{P}(x))]$ where
$\boldsymbol{P}:M\rightarrow%
{\textstyle\bigwedge\nolimits^{p}}
\mathbb{R}^{1,3}\hookrightarrow\mathbb{R}_{1,3}$.}
\begin{equation}
\psi_{\mathbf{\Xi}_{0}}\in\sec\mathcal{C}\ell^{0}(M,\mathtt{g}). \label{lie1}%
\end{equation}

\begin{remark}
When $\psi_{\mathbf{\Xi}_{0}}\tilde{\psi}_{\mathbf{\Xi}_{0}}\neq0$ we can
easily show that $\psi_{\mathbf{\Xi}_{0}}$ has the following
representation\footnote{The product $\rho(x)^{\frac{1}{2}}e^{-\frac
{\boldsymbol{\tau}_{\boldsymbol{g}}(x)\beta(x)}{2}}R(x)$ in Eq.(\ref{lie2})
must be understood as meaning the product $[(\Xi_{0}(x),\rho(x)][(\Xi
_{0}(x),e^{-\frac{\boldsymbol{\tau}_{\boldsymbol{g}}(x)\beta(x)}{2}}%
)][(\Xi_{0}(x),R(x))]=[(\Xi_{0},\rho^{\frac{1}{2}}e^{-\frac{\boldsymbol{\tau
}_{\boldsymbol{g}}\beta}{2}}R)],$where $\rho(x),\beta(x)\in%
{\textstyle\bigwedge\nolimits^{0}}
\mathbb{R}^{1,3}\hookrightarrow\mathbb{R}_{1,3}$, $R(x)\in\mathrm{Spin}%
_{1,3}^{e}\subset\mathbb{R}_{1,3}^{0}$ and $\boldsymbol{\tau}_{\boldsymbol{g}%
}(x)\in%
{\textstyle\bigwedge\nolimits^{4}}
\mathbb{R}^{1,3}\hookrightarrow\mathbb{R}_{1,3}$.}%
\begin{equation}
\psi_{\mathbf{\Xi}_{0}}=\rho^{\frac{1}{2}}e^{-\frac{\boldsymbol{\tau
}_{\boldsymbol{g}}\beta}{2}}R, \label{lie2}%
\end{equation}
where $\rho,\beta\in\sec%
{\textstyle\bigwedge\nolimits^{0}}
T^{\ast}M\hookrightarrow\sec\mathcal{C}\ell^{0}(M,\mathtt{g})$ and
\emph{\cite{lounesto}}
\begin{equation}
R=e^{\mathcal{F}}\in\sec\mathrm{Spin}_{1,3}^{e}(M,\mathtt{g})\hookrightarrow
\sec\mathcal{C}\ell^{0}(M,\mathtt{g}), \label{lie2a}%
\end{equation}
with $\mathcal{F}\in\sec%
{\textstyle\bigwedge\nolimits^{2}}
T^{\ast}M\hookrightarrow\sec\mathcal{C}\ell^{0}(M,\mathtt{g}).$
\end{remark}

Let $\boldsymbol{\xi}\in\sec TM$ be a smooth vector field. For any $x\in M$
there exists an unique integral curve of $\boldsymbol{\xi}$, given by
$t\mapsto\mathrm{h}(t,x)$, with $x=\mathrm{h}(0,x)$. We recall that for
$(t,x)\in$ $I(x)\mathbb{\times M}$ ($I(x)\subset\mathbb{R}$) the mapping
\textrm{h}$:(t,x)\mapsto\mathrm{h}(t,x)$ is called the flow of
$\boldsymbol{\xi}$. We suppose in what follows that the mappings
\textrm{h}$_{t}:=\mathrm{h}(t,~):M\rightarrow M$, $x\mapsto x^{\prime
}=\mathrm{h}_{t}(x)$ generate a one-parameter group of diffeomorphisms of $M$
(i.e., $I(x)=\mathbb{R)}$ We have (for a fixed $x\in M$)
\begin{equation}
\boldsymbol{\xi}(\mathrm{h}(t,x))=\left.  \frac{d}{dt}\mathrm{h}%
(t,x)\right\vert _{t=0.}. \label{opg}%
\end{equation}

Now, recall that $\psi_{\mathbf{\Xi}_{0}}$ determines global $1$-form fields
on $M$, namely $\boldsymbol{V}^{\alpha}\in\sec%
{\textstyle\bigwedge\nolimits^{1}}
T^{\ast}M\hookrightarrow\sec\mathcal{C}\ell(M,\mathtt{g})$ such that at
$x^{\prime}=\mathrm{h}_{t}(x)$
\begin{align}
\boldsymbol{V}^{\alpha}(x^{\prime})  &  =\psi_{\mathbf{\Xi}_{0}}(x^{\prime
})\boldsymbol{\gamma}^{\alpha}(x^{\prime})\tilde{\psi}_{\mathbf{\Xi}_{0}%
}(x^{\prime})=\rho(x^{\prime})R(x^{\prime})\boldsymbol{\gamma}^{a}(x^{\prime
})\tilde{R}(x^{\prime})\nonumber\\
&  =\rho(x^{\prime})L_{\beta}^{\alpha}(x^{\prime})\boldsymbol{\gamma}^{\beta
}(x^{\prime})=\rho(x^{\prime})\boldsymbol{\Gamma}^{\alpha}(x^{\prime}).
\label{elie3}%
\end{align}

The \emph{pullbacks} of $\boldsymbol{\gamma}^{\alpha}$, $\boldsymbol{\Gamma
}^{\alpha}$ and $\boldsymbol{V}^{\alpha}$ are the fields
\begin{equation}
\boldsymbol{\gamma}_{t}^{\prime\alpha}=\mathrm{h}_{t}^{\ast}\boldsymbol{\gamma
}^{\alpha},~~~\boldsymbol{\Gamma}_{t}^{\prime\alpha}=\mathrm{h}_{t}^{\ast
}\boldsymbol{\Gamma}^{\alpha},~~~~\boldsymbol{V}_{t}^{\prime\alpha}%
=\mathrm{h}_{t}^{\ast}\boldsymbol{V}^{\alpha}.~~~ \label{lie3a}%
\end{equation}

At $x\in$ $M$, expressing the diffeomorphism $x^{\prime}=\mathrm{h}_{t}(x)$ in
a local coordinate chart covering \ $\mathcal{U\subset}M$ it is
\begin{equation}
\boldsymbol{\gamma}_{t}^{\prime\alpha}(x)=h_{t\mu}^{\alpha}(x^{\prime
}(x))\frac{\partial x^{\prime\mu}}{\partial x^{\nu}}dx^{\nu},
\end{equation}
with similar formulas for the $\boldsymbol{\Gamma}_{t}^{\prime\alpha}$ and
$\boldsymbol{V}_{t}^{\prime\alpha}$. In particular take notice that%
\begin{align}
\boldsymbol{V}_{t}^{\prime\alpha}(x)  &  =\rho(x^{\prime}(x))L_{\beta}%
^{\alpha}(x^{\prime}(x))\boldsymbol{\gamma}^{\prime\beta}(x)\nonumber\\
&  =\rho(x^{\prime}(x))R^{\prime}(x)\boldsymbol{\gamma}^{\prime\alpha
}(x)\tilde{R}^{\prime}(x). \label{lie3b}%
\end{align}

\begin{remark}
It is clear that $\{\boldsymbol{\gamma}_{t}^{\prime\alpha}%
(x)\},\{\boldsymbol{\Gamma}_{t}^{\prime\alpha}(x))\}$ are not orthonormal
basis for $T_{x}M$ relative to $\mathrm{g}_{x}$ unless $\boldsymbol{\xi}$ is a
Killing vector field, but of course, they are orthonormal basis of $T_{x}M$
relative to $\mathrm{g}_{x}^{\prime}=\left.  \mathrm{h}_{t}^{\ast}%
\mathrm{g}\right\vert _{x}$.
\end{remark}

Eq.(\ref{lie2a}) says that $R\in\sec\mathrm{Spin}_{1,3}^{e}\subset
\sec\mathcal{C}\ell^{0}(M,\mathtt{g})$ is the exponential of a biform field
\cite{lounesto}, say $R=e^{\mathcal{F}(x)}$ Thus, we see that there exists
\emph{no} difficulty in defining the pullback\footnote{Or of more generally,
even for a\ non invertible $\psi_{\mathbf{\Xi}_{0}}\in\mathcal{C}\ell
^{0}(M,\mathtt{g})$, which is a sum of even nonhomogeneous differetial forms.}
of $\rho^{\frac{1}{2}}e^{-\frac{\boldsymbol{\tau}_{\boldsymbol{g}}\beta}{2}%
}e^{\mathcal{F}(x)}$ under $\mathrm{h}_{t}$ (or of more generally, for any
$\psi_{\mathbf{\Xi}_{0}}\in\mathcal{C}\ell^{0}(M,\mathtt{g})$), which will be
written as%
\begin{equation}
\rho^{\frac{1}{2}}(x^{\prime}(x))e^{-\frac{\tau_{\boldsymbol{gt}}^{\prime
}(x)\beta(x^{\prime}(x))}{2}}e^{\mathcal{F}_{t}^{\prime}(x)}. \label{lie3bb}%
\end{equation}
However, we immediately have a

\begin{quotation}
\noindent\textbf{Problem}: The object defined by Eq.(\ref{lie3bb}) is of
course, a representative in $\mathcal{C}\ell^{0}(M,g)$ of some Dirac-Hestenes
spinor field but there is no way to know to which the spinor frame that object
is associated.
\end{quotation}

Thus, we must find another way to define the Lie derivative for spinor fields.
Our way, as we will see, is based in a geometric motivated definition for the
concept of image of Clifford and spinor fields under diffeomorphisms generated
by one-parameter group associated to an arbitrary vector field
$\boldsymbol{\xi}$. But, we need first to introduce some results, starting
with the

\begin{proposition}
\label{liekilling}Let $\pounds _{\boldsymbol{\xi}}$ denotes the standard Lie
derivative of tensor fields. If $\boldsymbol{\xi}$ is a Killing vector field
then
\begin{align}
\pounds _{\boldsymbol{\xi}}\boldsymbol{\gamma}^{\alpha} &  =\frac{1}{4}%
[L(\xi)+d\xi,\boldsymbol{\gamma}^{\alpha}]\label{lie3b12}\\
&  =\boldsymbol{D}_{\xi}\boldsymbol{\gamma}^{\alpha}+\frac{1}{4}%
[d\xi,\boldsymbol{\gamma}^{\alpha}]\label{lie3b12a}%
\end{align}
with
\begin{equation}
L(\xi):=\frac{1}{2}(c_{\alpha\kappa\iota}+c_{\kappa\alpha\iota}+c_{\iota
\alpha\kappa})\xi^{\kappa}\boldsymbol{\gamma}^{\alpha}\wedge\boldsymbol{\gamma
}^{\iota}\label{lie3b12b}%
\end{equation}
where $c_{\cdot\kappa\iota}^{\alpha\cdot\cdot}$ are the structure coefficients
of the basis $\{\boldsymbol{e}_{\alpha}\}$ dual of $\{\boldsymbol{\gamma
}^{\alpha}\}$.
\end{proposition}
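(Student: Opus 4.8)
The plan is to establish the second equality (\ref{lie3b12a}) first, since it is the more transparent one, and then to deduce the first equality (\ref{lie3b12}) from it by rewriting the covariant term $\boldsymbol{D}_{\boldsymbol{\xi}}\boldsymbol{\gamma}^{\alpha}$ as a Clifford commutator with the bivector $L(\xi)$. Throughout I work in the global orthonormal cotetrad $\{\boldsymbol{\gamma}^{\alpha}\}$, so that all tetrad indices are moved with $\eta$, I identify $\boldsymbol{\xi}$ with its metric-dual $1$-form, and I use repeatedly that for a $2$-form (bivector) $B$ and a $1$-form $v$ the Clifford commutator collapses to a contraction, $[B,v]=2\,B\cdot v$, which is again a $1$-form; this is what accounts for the factor $\tfrac14$ appearing in front of the brackets.

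First I would prove (\ref{lie3b12a}). For any $1$-form $\omega$ and a torsion-free connection one has the purely algebraic identity $(\pounds_{\boldsymbol{\xi}}\omega-\boldsymbol{D}_{\boldsymbol{\xi}}\omega)(Y)=\omega(\boldsymbol{D}_{Y}\boldsymbol{\xi})$, obtained by expanding both derivatives on a test field $Y$ and using $[\boldsymbol{\xi},Y]=\boldsymbol{D}_{\boldsymbol{\xi}}Y-\boldsymbol{D}_{Y}\boldsymbol{\xi}$. Evaluating on $\boldsymbol{e}_{\beta}$ with $\omega=\boldsymbol{\gamma}^{\alpha}$ shows that the difference $\pounds_{\boldsymbol{\xi}}\boldsymbol{\gamma}^{\alpha}-\boldsymbol{D}_{\boldsymbol{\xi}}\boldsymbol{\gamma}^{\alpha}$ is the $1$-form $\boldsymbol{D}_{\beta}\xi^{\alpha}\,\boldsymbol{\gamma}^{\beta}$.

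Next comes the only place where the Killing hypothesis enters. For a torsion-free connection $d\boldsymbol{\xi}$ has components $\boldsymbol{D}_{\beta}\xi_{\gamma}-\boldsymbol{D}_{\gamma}\xi_{\beta}$, and the Killing equation $\boldsymbol{D}_{\beta}\xi_{\gamma}=-\boldsymbol{D}_{\gamma}\xi_{\beta}$ turns this into $2\boldsymbol{D}_{\beta}\xi_{\gamma}$, so $d\boldsymbol{\xi}=\boldsymbol{D}_{\beta}\xi_{\gamma}\,\boldsymbol{\gamma}^{\beta}\wedge\boldsymbol{\gamma}^{\gamma}$. Contracting with $\boldsymbol{\gamma}^{\alpha}$ and using the Killing antisymmetry a second time, I would get $\tfrac12(d\boldsymbol{\xi})\cdot\boldsymbol{\gamma}^{\alpha}=\boldsymbol{D}_{\beta}\xi^{\alpha}\,\boldsymbol{\gamma}^{\beta}$, which is exactly the difference computed above. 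Since $\tfrac14[d\boldsymbol{\xi},\boldsymbol{\gamma}^{\alpha}]=\tfrac12(d\boldsymbol{\xi})\cdot\boldsymbol{\gamma}^{\alpha}$, this establishes (\ref{lie3b12a}).

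It remains to show $\boldsymbol{D}_{\boldsymbol{\xi}}\boldsymbol{\gamma}^{\alpha}=\tfrac14[L(\xi),\boldsymbol{\gamma}^{\alpha}]$, which together with (\ref{lie3b12a}) yields (\ref{lie3b12}); note that this identity is independent of the Killing condition, since $L(\xi)$ in (\ref{lie3b12b}) contains no derivatives of $\boldsymbol{\xi}$. In the orthonormal frame the Levi-Civita covariant derivative acts on Clifford fields as $\boldsymbol{D}_{\boldsymbol{e}_{\mu}}=\partial_{\boldsymbol{e}_{\mu}}+\tfrac12[\omega_{\mu},\,\cdot\,]$ with the connection bivector $\omega_{\mu}=\tfrac12\omega_{\mu ab}\,\boldsymbol{\gamma}^{a}\wedge\boldsymbol{\gamma}^{b}$; on the frame $1$-forms the inhomogeneous term drops, so $\boldsymbol{D}_{\boldsymbol{\xi}}\boldsymbol{\gamma}^{\alpha}=\tfrac12[\xi^{\mu}\omega_{\mu},\boldsymbol{\gamma}^{\alpha}]$ and the claim reduces to matching the bivectors $L(\xi)=2\,\xi^{\mu}\omega_{\mu}$. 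I expect this matching to be the main obstacle: it requires inserting the standard expression of the anholonomic Levi-Civita coefficients $\omega_{\mu ab}$ in terms of the structure coefficients $c_{\cdot\kappa\iota}^{\alpha\cdot\cdot}$ (the Koszul combination), and then checking that the cyclic sum $c_{\alpha\kappa\iota}+c_{\kappa\alpha\iota}+c_{\iota\alpha\kappa}$ in (\ref{lie3b12b}), once antisymmetrized by the wedge $\boldsymbol{\gamma}^{\alpha}\wedge\boldsymbol{\gamma}^{\iota}$, reproduces $\xi^{\kappa}\omega_{\kappa\alpha\iota}$ up to the bookkeeping factors $\tfrac12$, $\tfrac14$ and the $2$ coming from $[B,v]=2B\cdot v$. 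The difficulty is entirely in keeping the index orderings and the sign convention of the structure constants mutually consistent; once that dictionary is fixed, the identity is a direct verification.
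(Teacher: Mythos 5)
Your proof is correct, and its skeleton matches the paper's: both arguments reduce the proposition to the same three facts, namely (i) the general formula $\pounds _{\boldsymbol{\xi}}\boldsymbol{\gamma}^{\alpha}=\boldsymbol{D}_{\boldsymbol{\xi}}\boldsymbol{\gamma}^{\alpha}+(\boldsymbol{D}_{\beta}\xi^{\alpha})\boldsymbol{\gamma}^{\beta}$, valid for arbitrary $\boldsymbol{\xi}$; (ii) the Killing-only identity $(\boldsymbol{D}_{\beta}\xi^{\alpha})\boldsymbol{\gamma}^{\beta}=\tfrac{1}{4}[d\xi,\boldsymbol{\gamma}^{\alpha}]$; and (iii) the pointwise identity $\boldsymbol{D}_{\boldsymbol{\xi}}\boldsymbol{\gamma}^{\alpha}=\tfrac{1}{4}[L(\xi),\boldsymbol{\gamma}^{\alpha}]$, obtained by identifying $L(\xi)$ with twice the connection biform $\omega_{\xi}$. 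Your (ii) and (iii) coincide with the paper's own computations (Eqs. (\ref{lieb6}) and (\ref{lie3b12f})), including the mechanism $[B,v]=2B\cdot v$ that converts Clifford commutators with a bivector into contractions. The genuine difference is in (i): the paper stays inside the Clifford/exterior calculus, invoking Cartan's magic formula $\pounds _{\boldsymbol{\xi}}=\xi\lrcorner d+d(\xi\lrcorner\ \cdot\ )$ together with the identity $d=\boldsymbol{\gamma}^{\iota}\wedge\boldsymbol{D}_{\boldsymbol{e}_{\iota}}$ and then expanding term by term, whereas you obtain (i) from the elementary torsion-free identity $(\pounds _{\boldsymbol{\xi}}\omega-\boldsymbol{D}_{\boldsymbol{\xi}}\omega)(Y)=\omega(\boldsymbol{D}_{Y}\boldsymbol{\xi})$ evaluated on $\omega=\boldsymbol{\gamma}^{\alpha}$, $Y=\boldsymbol{e}_{\beta}$. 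Your route is shorter and needs nothing beyond torsion-freeness of the Levi-Civita connection; the paper's derivation, though longer, keeps every step expressed in the operations ($\lrcorner$, $\wedge$, $\boldsymbol{D}$) native to the Clifford-bundle formalism the rest of the paper is built on, which is presumably why the authors chose it. Finally, the step you flag as the ``main obstacle'' --- the matching $L(\xi)=2\xi^{\kappa}\omega_{\kappa}$, equivalently $\omega_{\alpha\kappa\iota}=\tfrac{1}{2}(c_{\alpha\kappa\iota}+c_{\kappa\alpha\iota}+c_{\iota\alpha\kappa})$ --- is precisely Eq. (\ref{lie3b12c}), which the paper likewise quotes as a standard fact without derivation; so on that point your proof is no less complete than the paper's own.
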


\begin{proof}
First recall that Eq.(\ref{lie3b12a})\ is clearly equal to Eq.(\ref{lie3b12})
since once $\boldsymbol{D}_{\boldsymbol{e}_{\kappa}}\boldsymbol{\gamma
}^{\alpha}=-\omega_{\cdot\kappa\iota}^{\alpha\cdot\cdot}\boldsymbol{\gamma
}^{\iota}$ it is%
\begin{equation}
\omega_{\alpha\kappa\iota}=\frac{1}{2}(c_{\alpha\kappa\iota}+c_{\kappa
\alpha\iota}+c_{\iota\alpha\kappa}). \label{lie3b12c}%
\end{equation}
It follows that%

\begin{equation}
L(\xi)=2\omega_{\xi} \label{lie3b12d}%
\end{equation}
where%
\begin{equation}
\omega_{\xi}=\frac{1}{2}\xi^{\kappa}\omega_{\alpha\kappa\iota}%
\boldsymbol{\gamma}^{\alpha}\wedge\boldsymbol{\gamma}^{\iota} \label{lie3b12e}%
\end{equation}
is the \textquotedblleft connection biform\textquotedblright\ and so
\begin{equation}
\frac{1}{4}[L(\xi),\boldsymbol{\gamma}^{\alpha}]=\frac{1}{2}[\omega_{\xi
},\boldsymbol{\gamma}^{\alpha}]=-\boldsymbol{\gamma}^{\alpha}\lrcorner
\omega_{\xi}=\boldsymbol{D}_{\mathbf{\xi}}\boldsymbol{\gamma}^{\alpha}.
\label{lie3b12f}%
\end{equation}
Now, recalling Cartan's magical formula, and the following identities
\begin{equation}
\pounds _{\boldsymbol{\xi}}\boldsymbol{\gamma}^{\alpha}=\xi\lrcorner
d\boldsymbol{\gamma}^{\alpha}+d(\xi\lrcorner\boldsymbol{\gamma}^{\alpha}%
)=\xi\lrcorner d\boldsymbol{\gamma}^{\alpha}+d(\boldsymbol{\gamma}^{\alpha
}\lrcorner\xi) \label{lie3b2}%
\end{equation}%
\begin{equation}
d=\boldsymbol{\gamma}^{\iota}\wedge\boldsymbol{D}_{\boldsymbol{e}_{\iota}},
\end{equation}
we have%
\begin{align*}
\pounds _{\boldsymbol{\xi}}\boldsymbol{\gamma}^{\alpha}  &  =\xi
\lrcorner(\boldsymbol{\gamma}^{\iota}\wedge\boldsymbol{D}_{\boldsymbol{e}%
_{\iota}}\boldsymbol{\gamma}^{\alpha})+d(\boldsymbol{\xi}^{\alpha})\\
&  =(\xi\lrcorner\boldsymbol{\gamma}^{\iota})\boldsymbol{D}_{\boldsymbol{e}%
_{\iota}}\boldsymbol{\gamma}^{\alpha}-(\xi\lrcorner\boldsymbol{D}%
_{\boldsymbol{e}_{\iota}}\boldsymbol{\gamma}^{\alpha})\boldsymbol{\gamma
}^{\iota}+\boldsymbol{e}_{\iota}(\boldsymbol{\xi}^{\alpha})\boldsymbol{\gamma
}^{\iota}\\
&  =\boldsymbol{\xi}^{\iota}\boldsymbol{D}_{\boldsymbol{e}_{\iota}%
}\boldsymbol{\gamma}^{\alpha}+(\xi^{\mathbf{\kappa}}\boldsymbol{\gamma
}_{\mathbf{\kappa}}\lrcorner\omega_{\mathbf{\cdot\iota\lambda}}%
^{\mathbf{\alpha}\cdot\cdot}\boldsymbol{\gamma}^{\lambda})\boldsymbol{\gamma
}^{\iota}+\boldsymbol{e}_{\iota}(\boldsymbol{\xi}^{\alpha})\boldsymbol{\gamma
}^{\iota}\\
&  =\boldsymbol{D}_{\mathbf{\xi}}\boldsymbol{\gamma}^{\alpha}+[\xi
^{\mathbf{\lambda}}\omega_{\mathbf{\cdot\iota\lambda}}^{\mathbf{\alpha}%
\cdot\cdot}]\boldsymbol{\gamma}^{\iota}+\boldsymbol{e}_{\iota}(\boldsymbol{\xi
}^{\alpha})\boldsymbol{\gamma}^{\iota}\\
&  =\boldsymbol{D}_{\mathbf{\xi}}\boldsymbol{\gamma}^{\alpha}+(\boldsymbol{D}%
_{\iota}\xi^{\alpha})\boldsymbol{\gamma}^{\iota}%
\end{align*}
We get,
\begin{equation}
\pounds _{\boldsymbol{\xi}}\boldsymbol{\gamma}^{\alpha}=\boldsymbol{D}%
_{\boldsymbol{\xi}}\boldsymbol{\gamma}^{\alpha}+(\boldsymbol{D}_{\iota}%
\xi^{\alpha})\boldsymbol{\gamma}^{\iota}. \label{lieb4a}%
\end{equation}

Now, it remains to show that for $\xi$ a Killing vector field $\frac{1}%
{4}[d\xi,\boldsymbol{\gamma}^{\alpha}]$ is equal to $(\boldsymbol{D}_{\iota
}\xi^{\alpha})\boldsymbol{\gamma}^{\iota}$. We have
\begin{gather}
\frac{1}{4}[d\xi,\boldsymbol{\gamma}^{\alpha}]=-\frac{1}{2}\boldsymbol{\gamma
}^{\alpha}\lrcorner d\xi\nonumber\\
=-\frac{1}{2}\boldsymbol{\gamma}^{\alpha}\lrcorner(\boldsymbol{\gamma}^{\iota
}\wedge\boldsymbol{D}_{\boldsymbol{e}_{\iota}}\xi)=-\frac{1}{2}%
\boldsymbol{\gamma}^{\alpha}\lrcorner\{\boldsymbol{\gamma}^{\iota}%
\wedge(\boldsymbol{D}_{\iota}\xi_{\kappa})\boldsymbol{\gamma}^{\kappa
}\}\nonumber\\
=\frac{1}{2}\boldsymbol{D}_{\iota}\xi_{\kappa}\boldsymbol{\gamma}^{\alpha
}\lrcorner\{\boldsymbol{\gamma}^{\kappa}\wedge\boldsymbol{\gamma}^{^{\iota}%
}\}=\boldsymbol{D}_{\iota}\xi^{\alpha}\boldsymbol{\gamma}^{\iota}-\frac{1}%
{2}\left(  \boldsymbol{D}_{\iota}\xi_{\kappa}+\boldsymbol{D}_{\kappa}\xi
_{l}\right)  \eta^{\alpha\iota}\boldsymbol{\gamma}^{\kappa}=\boldsymbol{D}%
_{\iota}\xi^{\alpha}\boldsymbol{\gamma}^{\iota}\label{lieb6}%
\end{gather}
and the proposition is proved.
\end{proof}

\begin{remark}
We emphasize that in the derivation of the previous result it has not been
used the fact that $\pounds _{\boldsymbol{\xi}}$ must be a derivation in the
Clifford bundle, since all operations done requires only simple and well known
formulas from the calculus of differential forms.
\end{remark}

\begin{remark}
Moreover, one can easily show using the previous results that when
$\mathcal{C}\in\sec\mathcal{C\ell(}M,\mathtt{g})$ and $\boldsymbol{\xi}\in\sec
TM$ is a Killing vector field then
\begin{equation}
\pounds _{\boldsymbol{\xi}}\mathcal{C}=\mathfrak{d}_{\xi}\mathcal{C+}\frac
{1}{4}[\mathbf{S}(\xi),\mathcal{C}]. \label{lieb16a}%
\end{equation}

\end{remark}

Indeed, Eq.(\ref{lieb16a}) follows trivially by induction and noting that
$\pounds _{\boldsymbol{\xi}}(\mathcal{AB})=\pounds _{\boldsymbol{\xi}%
}(\mathcal{A)B}+\mathcal{A}\pounds _{\boldsymbol{\xi}}(\mathcal{B})$, where
$\mathcal{A},\mathcal{B}\in\sec\mathcal{C\ell(}M,\mathtt{g}),$ when
$\boldsymbol{\xi}\in\sec TM$ is a Killing vector field.

This suggests that $L(\xi)$ should be involved in the definition of the Lie
derivative of spinor fields. Based on this, and recalling Eq.(\ref{lie2a}) we
propose that the \emph{spinor lifting} of an integral curve of a generic
smooth vector field $\boldsymbol{\xi}\in\sec TM$ to $P_{\mathrm{Spin}^{e}%
1,3}(M,\mathtt{g})$ in the parallelizabe manifold $M$ equipped with the global
orthonormal cobasis $\{\boldsymbol{\gamma}^{\alpha}\}$ is given by the
following\smallskip

\begin{definition}
\label{SPLIFT}Consider the integral curve $\mathrm{h}_{t}:\mathbb{R}%
\rightarrow M$ of an arbitrary smooth vector field $\boldsymbol{\xi}$. The
spinor lifiting $\breve{h}_{t}$ of $\mathrm{h}_{t}$ to $P_{\mathrm{Spin}%
_{1,3}^{e}}(M,\mathtt{g})$ is the curve
\begin{gather}
\breve{h}_{t}(p)=(h_{t}(\pi(p)),au_{t}(h_{t}(\pi(p)))\\
u_{t}(x):=e^{-\frac{1}{4}t(\mathbf{S}(\xi)(x))}\in\mathrm{Spin}_{1,3}^{e},\\
\mathbf{S}(\xi)=L(\xi)+d\xi,
\end{gather}
with $\pi(p)=\pi((x,a))=x$.
\end{definition}

To see why the above definition is really important consider that for $t<<1$
it is
\begin{equation}
u_{t}=1-\frac{1}{4}t\mathbf{S}(\xi)+O(t^{2})+\cdots\label{lieb11}%
\end{equation}
Then, we have for $t<<1$ that
\begin{align}
u_{t}^{-1}\boldsymbol{\gamma}^{\alpha}u_{t}  &  =\{1+\frac{1}{4}%
t\mathbf{S}(\xi)+O(t^{2})+\cdots\}\boldsymbol{\gamma}^{\alpha}\{1-\frac{1}%
{4}t\mathbf{S}(\xi)+O(t^{2})+\cdots\}\nonumber\\
&  =\boldsymbol{\gamma}^{\alpha}+\frac{1}{4}t[\mathbf{S}(\xi
),\boldsymbol{\gamma}^{\alpha}]+O(t^{2})+\cdots\label{lieb121}%
\end{align}
Deriving in $t=0$ we obtain the expression of the previous proposition.

Now, recall that the pullback $\boldsymbol{\gamma}_{t}^{\prime\alpha
}=\mathrm{h}_{t}^{\ast}\boldsymbol{\gamma}^{\alpha}$ when $\boldsymbol{\xi}$
\ is any (arbitrary) vector field for $t<<1$ is%

\begin{equation}
\boldsymbol{\gamma}_{t}^{\prime\alpha}(x)=\boldsymbol{\gamma}^{\alpha
}(x)+t\pounds _{\boldsymbol{\xi}}\boldsymbol{\gamma}^{\alpha}(x)+O(t^{2}%
)+\cdots\label{lie3b1}%
\end{equation}

Using the Proposition (\ref{liekilling}) (valid when $\xi$ is
Killing),\textbf{ }comparing Eq.(\ref{lie3b1}) with Eq.(\ref{lieb121}) and
recalling Eq.(\ref{lie3b12}), we see that up to the\emph{ first order} we have
for this case%
\begin{equation}
\boldsymbol{\gamma}_{t}^{\prime\alpha}(x)=u_{t}^{-1}(x)\boldsymbol{\gamma
}^{\alpha}(x)u_{t}(x)\label{lieb13}%
\end{equation}

\begin{remark}
We could write the first member of \emph{Eq.(\ref{lieb13})} and keeping terms
up to first order as%
\begin{equation}
\boldsymbol{\gamma}_{t}^{\prime\alpha}=\Lambda_{t\beta}^{\alpha}%
\boldsymbol{\gamma}^{\beta}=(\delta_{\beta}^{\alpha}+t\Sigma_{\beta}^{\alpha
})\boldsymbol{\gamma}^{\beta} \label{lieb14}%
\end{equation}
where $\Lambda_{t\beta}^{\alpha}(x)\in\mathrm{SO}_{1,3}^{e}$ for any
$x\in\mathcal{U}\subset M$. Thus \
\begin{equation}
\boldsymbol{\gamma}^{\beta}\Sigma_{\beta}^{\alpha}=\frac{1}{4}[\mathbf{S}%
(\xi),\boldsymbol{\gamma}^{\alpha}]=-\frac{1}{2}\boldsymbol{\gamma}^{\alpha
}\lrcorner(\mathbf{S}(\xi)) \label{lieb15}%
\end{equation}
and then
\begin{align}
\Sigma_{\kappa}^{\alpha}  &  =\boldsymbol{\gamma}_{\kappa}\lrcorner
\boldsymbol{\gamma}^{\beta}\Sigma_{t\beta}^{\alpha}=-\frac{1}{2}%
\boldsymbol{\gamma}_{\kappa}\lrcorner(\boldsymbol{\gamma}^{\alpha}%
\lrcorner\mathbf{S}(\xi))\nonumber\\
&  =-\frac{1}{2}(\boldsymbol{\gamma}_{\kappa}\wedge\boldsymbol{\gamma}%
^{\alpha})\lrcorner\mathbf{S}(\xi)\nonumber\\
&  =\frac{1}{2}(\boldsymbol{\gamma}^{\alpha}\wedge\boldsymbol{\gamma}_{\kappa
})\cdot\mathbf{S}(\xi)=\frac{1}{2}\mathbf{S}(\xi)\cdot(\boldsymbol{\gamma
}^{\alpha}\wedge\boldsymbol{\gamma}_{\kappa}). \label{lieb16}%
\end{align}

\end{remark}

From Eq.(\ref{lieb13}), the Lie derivative (when $\xi$ is Killing)
$\pounds _{\boldsymbol{\xi}}\boldsymbol{\gamma}^{\alpha}$ can be calculated in
two ways, using the usual definition by pullback or by the action of $u_{t}$.
Note, however that the action of $u_{t}$ is always orthogonal, regardless of
$\boldsymbol{\xi}$ be Killing. We will use this fact to give our geometric
motivated concept of Lie derivatives for Clifford and spinor fields.

\begin{remark}
It is very important to keep in mind that although the usual Lie derivative of
$\boldsymbol{\gamma}^{\alpha}$ in the direction of an arbitrary smooth Killing
vector field $\boldsymbol{\xi}$ is given by \emph{Eq.(\ref{lieb13}) }this does
not implies, of course that $\pounds _{\boldsymbol{\xi}}\boldsymbol{g}$ is
null for an arbitrary vector field. In fact $\pounds _{\boldsymbol{\xi}%
}\boldsymbol{g}=0$ defines a Killing vector field and
$\pounds _{\boldsymbol{\xi}}\boldsymbol{g}=0$ implies that there exists
$U_{t}\in\sec P_{\mathrm{Spin}_{1,3}^{e}}(M,\mathtt{g})$ such that in all
orders in $t$ \emph{(}not only in first order as in $Eq.(\ref{lieb13}))$ it
holds that
\begin{equation}
\mathrm{h}_{t}^{\ast}~\boldsymbol{\gamma}^{\alpha}=U_{t}^{-1}%
\boldsymbol{\gamma}^{\alpha}U_{t}=\mathbf{\Lambda}_{_{t}\beta}^{\mathbf{\alpha
}}\boldsymbol{\gamma}^{\beta}\label{nn1}%
\end{equation}
where for all $x\in M$, $\mathbf{\Lambda}_{_{t}\beta}^{\mathbf{\alpha}}%
(x)\in\mathrm{SO}_{1,3}^{e}$.
\end{remark}

\section{The Spinor Lie Derivative $\protect\overset{s}{\pounds }%
_{\boldsymbol{\xi}}$}

\subsection{Spinor Images of Clifford and Spinor Fields}

Given the spinorial frame $\mathbf{\Xi}_{u_{t}}(x)=(x,u_{t})$ in
$P_{\mathrm{Spin}_{1,3}^{e}}(M,\mathtt{g})$ we see that the basis
$\{\boldsymbol{\check{\gamma}}_{t}^{\alpha}\}$ of $P_{\mathrm{SO}_{1,3}^{e}%
}(M,\mathtt{g})$ such that
\begin{equation}
\boldsymbol{\check{\gamma}}_{t}^{\alpha}(x)=u_{t}^{-1}(x)\boldsymbol{\gamma
}^{\alpha}(x))u_{t}(x)=\Lambda_{t\beta}^{\alpha}(x)\boldsymbol{\gamma}^{\beta
}(x), \label{orthogonal}%
\end{equation}
is always orthonormal relative to $\mathtt{g}$. This suggests to define a
mapping $^{s}$\textrm{h}$_{t}$ \linebreak(associated with a one parameter
group of diffeomorphisms \textrm{h}$_{t}$ generated by a vector field
$\boldsymbol{\xi}$ acting on sections $%
{\textstyle\bigwedge\nolimits^{p}}
T^{\ast}M,$ $\mathcal{C\ell(}M,\mathrm{g})$, $\mathcal{C}\ell_{\mathrm{Spin}%
_{1,3}^{e}}^{l}(M,\mathtt{g})$, $\mathcal{C}\ell_{\mathrm{Spin}_{1,3}^{e}}%
^{r}(M,\mathtt{g})$. With $x^{\prime}=\mathrm{h}_{t}(x)$ we start giving

\begin{definition}%
\begin{gather}
^{s}\mathrm{h}_{t}:\sec\mathcal{C\ell(}M,\mathrm{g})\hookleftarrow\sec%
{\textstyle\bigwedge\nolimits^{p}}
T^{\ast}M\rightarrow\sec%
{\textstyle\bigwedge\nolimits^{p}}
T^{\ast}M\hookrightarrow\sec\mathcal{C\ell(}M,\mathrm{g}),\nonumber\\
P(x^{\prime})\mapsto\check{P}_{t}(x)=\frac{1}{p!}P_{i_{1}\cdots i_{p}%
}(x^{\prime}(x))\boldsymbol{\check{\gamma}}_{t}^{i_{1}}(x)\cdots
\boldsymbol{\check{\gamma}}_{t}^{i_{1}}(x)\nonumber\\
=\frac{1}{p!}P_{i_{1}\cdots i_{p}}(x^{\prime}(x))u_{t}^{-1}\boldsymbol{\gamma
}^{i_{1}}(x)\cdots\boldsymbol{\gamma}^{i_{1}}(x)u_{t} \label{bg1}%
\end{gather}
with%
\begin{align}
P(x)  &  =\frac{1}{p!}P_{i_{1}\cdots i_{p}}(x)\boldsymbol{\gamma}^{i_{1}%
}(x)\cdots\boldsymbol{\gamma}^{i_{1}}(x)\neq\check{P}_{t}(x)\nonumber\\
P(x^{\prime})  &  =\frac{1}{p!}P_{i_{1}\cdots i_{p}}(x^{\prime}%
)\boldsymbol{\gamma}^{i_{1}}(x^{\prime})\cdots\boldsymbol{\gamma}^{i_{1}%
}(x^{\prime})
\end{align}
\emph{Eq.(\ref{bg1})} extends by linearity to all sections of $\mathcal{C\ell
(}M,\mathrm{g})$. Given any $\mathcal{C}\in\sec\mathcal{C\ell(}M,\mathrm{g})$
we will call $\mathcal{\check{C}}_{t}$ the spinor image of $\mathcal{C}$.
\end{definition}

\subsection{Spinor Derivative of Clifford and Spinor Fields}

\begin{definition}
The spinor Lie Derivative $\overset{\boldsymbol{s}}{\pounds }_{\boldsymbol{\xi
}}$of a Clifford field $\boldsymbol{C}=[(\Xi_{0},\mathcal{C})]$ (a section of
$\mathcal{C\ell(}M,\mathrm{g})$\emph{) in the direction of an arbitrary vector
field }$\boldsymbol{\xi}$\emph{ is}%
\begin{equation}
\overset{\boldsymbol{s}}{\pounds }_{\boldsymbol{\xi}}\boldsymbol{C=}\left.
\frac{d}{dt}\boldsymbol{\check{C}}_{t}\right\vert _{t=0} \label{glie}%
\end{equation}

\end{definition}

A trivial calculation gives
\begin{equation}
\overset{\boldsymbol{s}}{\pounds }_{\boldsymbol{\xi}}\boldsymbol{C}%
=\mathfrak{d}_{\boldsymbol{\xi}}\boldsymbol{C+}\frac{1}{4}[\boldsymbol{S}%
(\boldsymbol{\xi}),\boldsymbol{C}]. \label{GLIE1}%
\end{equation}

Given that a left DHSF\ $\varPsi$, a section of $\mathcal{C\ell}%
_{\mathrm{Spin}_{1,3}^{e}}^{l}\mathcal{(}M,\mathtt{g})$\ (respectively
$\varPhi$, a section of $\mathcal{C\ell}_{\mathrm{Spin}_{1,3}^{e}}%
^{r}\mathcal{(}M,\mathtt{g})$) can be written as%
\begin{align}
\boldsymbol{\psi}_{\Xi_{0}}  &  =\varPsi\mathbf{1}_{\Xi_{0}}^{r}%
,~~~~\boldsymbol{\phi}_{\Xi_{0}}=\mathbf{1}_{\Xi_{0}}^{l}\varPhi,\\
\boldsymbol{\psi}_{\Xi_{0}}\mathbf{1}_{\Xi_{0}}^{l}  &  =\varPsi\mathbf{1}%
_{\Xi_{0}}^{r}\mathbf{1}_{\Xi_{0}}^{l}=\varPsi1,~~~~\mathbf{1}_{\Xi_{0}}%
^{r}\boldsymbol{\phi}_{\Xi_{0}}=\mathbf{1}_{\Xi_{0}}^{r}\mathbf{1}_{\Xi_{0}%
}^{l}\varPhi=1\varPhi,\\
\ \varPsi  &  =\boldsymbol{\psi}_{\Xi_{0}}\mathbf{1}_{\Xi_{0}}^{l}%
=\boldsymbol{\psi}_{\Xi_{u}}\mathbf{1}_{\Xi u}^{l},~~~~\varPhi=\mathbf{1}%
_{\Xi_{0}}^{r}\boldsymbol{\phi}_{\Xi_{0}}=\mathbf{1}_{\Xi_{u}}^{r}%
\boldsymbol{\phi}_{\Xi_{u}}. \label{si3}%
\end{align}

Using Eq.(\ref{si3}) and thar $\boldsymbol{\psi}_{\Xi_{0}},\boldsymbol{\phi
}_{\Xi_{0}}\in\sec\mathcal{C\ell}^{0}\mathcal{(}M,\mathtt{g})$, where we know
how to act, we propose the following definition:

\begin{definition}
The spinor images of $\varPsi$ and $\varPhi$ are:%
\begin{align}
^{s}\mathrm{h}_{t}~\varPsi\boldsymbol{\circ}\mathrm{h}_{t}x  &  =\mathrm{~}%
^{s}\varPsi_{t}(x):=(\boldsymbol{u}_{t}^{-1}\boldsymbol{\psi}_{\Xi_{0}%
}(x)\boldsymbol{u}_{t})~^{s}\mathbf{1}_{t\Xi_{0}}^{l},\label{si4}\\
^{s}\mathrm{h}_{t}\varPhi\boldsymbol{\circ}\mathrm{h}_{t}x  &  =\mathrm{~}%
^{s}\varPhi_{t}\boldsymbol{(x)}:=~^{s}\mathbf{1}_{t\Xi_{0}}^{r}(\boldsymbol{u}%
_{t}^{-1}\boldsymbol{\phi}_{\Xi_{0}}(x)\boldsymbol{u}_{t}) \label{si5}%
\end{align}
and
\begin{equation}
^{s}\mathbf{1}_{t\Xi_{0}}^{l}:=\boldsymbol{u}_{t}^{-1}\mathbf{1}_{\Xi_{0}}%
^{l},~~~~~~^{s}\mathbf{1}_{t\Xi_{0}}^{r}=\mathbf{1}_{\Xi_{0}}^{r}u_{t}.
\label{si6}%
\end{equation}

\end{definition}

\begin{definition}
With these actions, we define:
\begin{align}
\overset{s}{\pounds }_{\boldsymbol{\xi}}\varPsi  &  \mathbf{:}=\frac{d}%
{dt}\left.  ^{s}\varPsi_{t}(x)\right\vert _{t=0},\nonumber\\
\overset{s}{\pounds }_{\boldsymbol{\xi}}\varPhi  &  :=\frac{d}{dt}\left.
^{s}\varPhi_{t}(x)\right\vert _{t=0}%
\end{align}

\end{definition}

The objects $^{s}\mathbf{\Psi}_{t},^{s}\varPsi_{t},^{s}\boldsymbol{\psi
}_{t\boldsymbol{\mathbf{\Xi}}_{0}},^{s}\mathbf{\Phi}_{t},^{s}\varPhi_{t}%
,^{s}\boldsymbol{\phi}_{t\boldsymbol{\mathbf{\Xi}}_{0}},^{s}\boldsymbol{C}%
_{t}$ ( sections of $I^{l}(M,\mathtt{g})$, $I^{r}(M,\mathtt{g})$,
$\mathcal{C\ell}_{\mathrm{Spin}_{1,3}^{e}}^{l}\mathcal{(}M,\mathtt{g})$,
$\mathcal{C\ell}_{\mathrm{Spin}_{1,3}^{e}}^{r}\mathcal{(}M,\mathtt{g})$) will
be referred in what follows as the \emph{spinor images} of the fields
$\mathbf{\Psi},\varPsi,\boldsymbol{\psi}_{\boldsymbol{\mathbf{\Xi}}_{0}%
},\mathbf{\Phi},\varPhi,\boldsymbol{\phi}_{\boldsymbol{\mathbf{\Xi}}_{0}%
},\boldsymbol{C}$.\smallskip

A trivial calculation gives
\begin{align}
\overset{\boldsymbol{s}}{\pounds _{\boldsymbol{\xi}}}\varPsi  &
=\mathfrak{d}_{\mathbf{\xi}}\varPsi+\frac{1}{4}\boldsymbol{S}(\xi
)\varPsi,\nonumber\\
\overset{\boldsymbol{s}}{\pounds _{\boldsymbol{\xi}}}\varPhi  &
=\mathfrak{d}_{\mathbf{\xi}}\varPhi-\varPhi\frac{1}{4}\boldsymbol{S}(\xi)
\label{bglie1}%
\end{align}
and we observe that it is%
\begin{equation}
\overset{s}{\pounds }_{\boldsymbol{\xi}}\mathbf{1}_{\Xi_{0}}^{l}=\frac{1}%
{4}\mathbf{S}(\boldsymbol{\xi})\mathbf{1}_{\Xi_{0}}^{l}%
,~~~~\overset{s}{\pounds }_{\boldsymbol{\xi}}\mathbf{1}_{\Xi_{0}}^{r}%
=-\frac{1}{4}\mathbf{1}_{\Xi_{0}}^{r}\mathbf{S}(\boldsymbol{\xi}).
\label{liebh}%
\end{equation}

\begin{remark}
In the Clifford bundle in the basis $\mathbf{\Xi}_{0}$, $\psi_{\mathbf{\Xi
}_{0}}\in\sec\mathcal{C}\ell(M,\mathtt{g})$ is the representative of $\Psi
\in\sec\mathcal{C}\ell_{\mathrm{Spin}_{1,3}^{e}}^{\ell}(M,\mathtt{g})$ and if
we calculated its spinor Lie derivative as a section of\ $\mathcal{C}%
\ell(M,\mathtt{g})$ we should get, of course%
\begin{equation}
\overset{s}{\pounds }_{\boldsymbol{\xi}}\psi_{\mathbf{\Xi}_{0}}=\mathfrak{d}%
_{\boldsymbol{\xi}}\psi_{\mathbf{\Xi}_{0}}+\frac{1}{4}[L(\xi\boldsymbol{)}%
+d\xi,\psi_{\mathbf{\Xi}_{0}}]. \label{lie9}%
\end{equation}

This does not mimics the spinor Lie derivative of\ a DHSF $\Psi$. Since one of
the main reasons to introduce representatives in the Clifford bundle of
Dirac-Hestenes spinor fields is to have an easy computation tool when using
these representatives together with other Clifford fields we will agree to
take as the Lie derivative of $\psi_{\mathbf{\Xi}_{0}}$ an effective Lie
derivative denoted $\overset{(s)}{\pounds }_{\boldsymbol{\xi}}\psi
_{\mathbf{\Xi}_{0}}$ where the pullback of $\psi_{\mathbf{\Xi}_{0}}$ is the
formula given by \emph{Eq.(\ref{lie3bb})}. Thus,
\begin{equation}
\overset{(s)}{\pounds }_{\boldsymbol{\xi}}\psi_{\mathbf{\Xi}_{0}}%
=\mathfrak{d}_{\boldsymbol{\xi}}\psi_{\mathbf{\Xi}_{0}}+\frac{1}{4}%
L(\xi\boldsymbol{)}\psi_{\mathbf{\Xi}_{0}}+\frac{1}{4}d\xi\psi_{\mathbf{\Xi
}_{0}} \label{lie10}%
\end{equation}
We then write for $\mathcal{C}\in\sec\mathcal{C}\ell(M,\mathtt{g})$ and
$\psi_{\mathbf{\Xi}_{0}}$ as just defined
\begin{equation}
\overset{(s)}{\pounds }_{\boldsymbol{\xi}}(\mathcal{C}\psi_{\mathbf{\Xi}_{0}%
})=(\overset{s}{\pounds }_{\boldsymbol{\xi}}\mathcal{C)}\psi_{\mathbf{\Xi}%
_{0}})+\mathcal{C}(\overset{(s)}{\pounds }_{\boldsymbol{\xi}}\psi
_{\mathbf{\Xi}_{0}}). \label{lie11}%
\end{equation}
\ 
\end{remark}

\begin{remark}
An analogous concept to $\overset{(s)}{\pounds }_{\boldsymbol{\xi}}$ has been
introduced in \emph{\cite{rc2007}} for the covariant derivative of
representatives in the Clifford bundle of Dirac-Hestenes spinor fields and we
recall that for $\mathcal{C}\in\sec\mathcal{C}\ell(M,\mathtt{g})$ and
$\psi_{\mathbf{\Xi}_{0}}$ as above defined we have%
\begin{align}
\overset{(s)}{\boldsymbol{D}}_{\boldsymbol{\xi}}(\mathcal{C}\psi_{\mathbf{\Xi
}_{0}})  &  =(\boldsymbol{D}_{\boldsymbol{\xi}}\mathcal{C)}\psi_{\mathbf{\Xi
}_{0}}+\mathcal{C}(\overset{(s)}{\boldsymbol{D}}_{\boldsymbol{\xi}}%
\psi_{\mathbf{\Xi}_{0}}),\nonumber\\
\boldsymbol{D}_{\boldsymbol{\xi}}\mathcal{C}  &  =\mathfrak{d}_{\xi
}\mathcal{C+}\frac{1}{2}[\omega_{\xi},\mathcal{C}],\nonumber\\
\overset{(s)}{\boldsymbol{D}}_{\boldsymbol{\xi}}\psi_{\mathbf{\Xi}_{0}}  &
=\mathfrak{d}_{\xi}\psi_{\mathbf{\Xi}_{0}}\mathcal{+}\frac{1}{2}\omega_{\xi
}\psi_{\mathbf{\Xi}_{0}}. \label{cd1234}%
\end{align}
with
\begin{equation}
\omega_{\xi}:=\frac{1}{2}\xi^{\kappa}\omega_{\alpha\kappa\beta}%
\boldsymbol{\gamma}^{\alpha}\boldsymbol{\gamma}^{\beta}. \label{c2f}%
\end{equation}
called the \textquotedblleft connection $2$-form\textquotedblright.
Henceforth, to simplify the notation, the covariant derivative acting in a
representative\ in the Clifford bundle of a DHSF will be written as%
\[
\boldsymbol{D}_{\boldsymbol{\xi}}^{s}\psi_{\mathbf{\Xi}_{0}}=\mathfrak{d}%
_{\xi}\psi_{\mathbf{\Xi}_{0}}\mathcal{+}\frac{1}{2}\omega_{\xi}\psi
_{\mathbf{\Xi}_{0}}%
\]
and we will write also $\overset{s}{\pounds }_{\boldsymbol{\xi}}%
\psi_{\mathbf{\Xi}_{0}}$ (given by \emph{Eq.(\ref{lie11}))} instead of
$\overset{(s)}{\pounds }_{\boldsymbol{\xi}}\psi_{\mathbf{\Xi}_{0}}.$
\end{remark}

\begin{remark}
One can easily verify that with this agreement we have a perfectly consistent
formalism. Indeed, recalling that the spinor bundles are modules over
$\mathcal{C\ell(}M,\mathtt{g})$ and that any section $\boldsymbol{C}$ of
$\mathcal{C\ell(}M,\mathtt{g})$ \emph{(see Eq.(\ref{PAIRING}))} can be written
as the product of a section $\Psi$ of $\mathcal{C\ell}_{\mathrm{Spin}%
_{1,3}^{e}}^{l}\mathcal{(}M,\mathtt{g})$ by a section $\Phi$ of
$\mathcal{C\ell}_{\mathrm{Spin}_{1,3}^{e}}^{r}\mathcal{(}M,\mathtt{g})$, i.e.,
$\boldsymbol{C}=\Psi\Phi$ we immediately verify that the operator
$\overset{s}{\pounds }_{\boldsymbol{\xi}}$ satisfies when applied to Clifford
and spinor fields the Leibniz rule, i.e.,
\begin{align}
\overset{s}{\pounds }_{\boldsymbol{\xi}}(\varPsi\varPhi)  &
=(\overset{s}{\pounds }_{\boldsymbol{\xi}}%
\varPsi)\varPhi+\varPsi(\overset{s}{\pounds }_{\boldsymbol{\xi}}%
\varPhi),\label{lieb16c}\\
\overset{s}{\pounds }_{\boldsymbol{\xi}}(\boldsymbol{C}\varPsi)  &
=(\overset{s}{\pounds }_{\boldsymbol{\xi}}\boldsymbol{C}%
)\varPsi+\boldsymbol{C}(\overset{s}{\pounds }_{\boldsymbol{\xi}}%
\varPsi),\label{liebc1}\\
\overset{s}{\pounds }_{\boldsymbol{\xi}}(\varPhi\boldsymbol{C})  &
=(\overset{s}{\pounds }_{\boldsymbol{\xi}}\boldsymbol{C}%
)\varPhi+\boldsymbol{C}(\overset{s}{\pounds }_{\boldsymbol{\xi}}\varPhi).
\label{liebc2}%
\end{align}

\end{remark}

\section{The Spinor Lie Derivative Written in Terms of Covariant Derivatives}

Recalling Eq.(\ref{lie3b12d})
\begin{equation}
\frac{1}{4}L(\xi\boldsymbol{)}\psi_{\mathbf{\Xi}_{0}}=\frac{1}{2}\omega_{\xi
}\psi_{\mathbf{\Xi}_{0}}%
\end{equation}
and that%
\begin{equation}
d\xi=\gamma^{\alpha}\wedge(\boldsymbol{D}_{e_{\alpha}}\xi)=\frac{1}%
{2}(\boldsymbol{D}_{\alpha}\xi_{\beta}-\boldsymbol{D}_{\beta}\xi_{\alpha
})\boldsymbol{\gamma}^{\alpha}\boldsymbol{\gamma}^{\beta}%
\end{equation}
we see that Eq.(\ref{cd1234}) gives%

\begin{equation}
\overset{s}{\pounds }_{\boldsymbol{\xi}}\psi_{\mathbf{\Xi}_{0}}%
=~\boldsymbol{D}_{\boldsymbol{\xi}}\psi_{\mathbf{\Xi}_{0}}-\frac{1}%
{8}(\boldsymbol{D}_{\alpha}\xi_{\beta}-\boldsymbol{D}_{\beta}\xi_{\alpha
})\boldsymbol{\gamma}^{\alpha}\boldsymbol{\gamma}^{\beta}\psi_{\mathbf{\Xi
}_{0}}. \label{liedd}%
\end{equation}

\subsection{Calculation of $\protect\overset{s}{\pounds }_{\boldsymbol{\xi}%
}\psi_{\mathbf{\Xi}_{0}}$ in Coordinates}

We show now that the spinor Lie derivative of spinor fields coincides with the
one first introduced by Lichnerowicz \cite{lichnerowicz}. To see this, we
evaluate the spinor Lie derivative of a spinor field introducing coordinates
$\{x^{\mu}\}$ for $\mathcal{U}\subset M$. We write
\begin{equation}
\boldsymbol{\gamma}^{\mathbf{\alpha}}=h_{\mu}^{\mathbf{\alpha}}dx^{\mu}%
\end{equation}
and with $\boldsymbol{D}$ the Levi-Civita connection of $\boldsymbol{g}$ we
write as usual%
\begin{equation}
\boldsymbol{D}_{\partial_{\mu}}\boldsymbol{\gamma}^{\beta}=-\omega_{\cdot
\mu\mathbf{\alpha}}^{\mathbf{\beta}\cdot\cdot}\boldsymbol{\gamma
}^{\mathbf{\alpha}},~~~~\boldsymbol{D}_{\partial_{\mu}}dx^{\nu}=-\Gamma
_{\cdot\mu\tau}^{\nu\cdot\cdot}dx^{\tau}.
\end{equation}
Thus, as well known%
\begin{equation}
\partial_{\mu}h_{\nu}^{\mathbf{\alpha}}+\omega_{\cdot\mu\mathbf{\beta}%
}^{\mathbf{\alpha}\cdot\cdot}h_{\nu}^{\mathbf{\beta}}-h_{\sigma}%
^{\mathbf{\alpha}}\Gamma_{\mu\cdot\nu}^{\sigma\cdot\cdot}=0,
\end{equation}
from where we get
\begin{equation}
\omega_{\mathbf{\alpha}\mu\mathbf{\beta}}=-(\partial_{\mu}h_{\mathbf{\alpha
}\nu})h_{\mathbf{\beta}}^{\nu}+\Gamma_{\mu\mathbf{\alpha\beta}}. \label{upa}%
\end{equation}
Take notice that in writing Eq.(\ref{upa}) we used in agreement with the
original definition of the Christofell symbols that it is \emph{licit} to
write in a coordinate basis (as some authors do, e.g., \cite{kn,lovrund})%
\[
\Gamma_{\cdot\mu\nu}^{\rho\cdot\cdot}=\Gamma_{\mu\cdot\nu}^{\cdot\rho\cdot
}=g^{\rho\sigma}\Gamma_{\mu\sigma\nu}:=\frac{1}{2}\left(  \frac{\partial
g_{\nu\sigma}}{\partial x^{\mu}}+\frac{\partial g_{\sigma\mu}}{\partial
x^{\nu}}-\frac{\partial g_{\mu\nu}}{\partial x^{\sigma}}\right)  .
\]
So, we get
\begin{align}
\overset{s}{\pounds }_{\boldsymbol{\xi}}\psi_{\mathbf{\Xi}_{0}}  &
=\mathfrak{d}_{\xi}\psi_{\mathbf{\Xi}_{0}}-\frac{1}{4}h_{\mathbf{\beta}}^{\nu
}\{\xi^{\mu}\partial_{\mu}h_{\mathbf{\alpha}\nu}+\xi^{\mu}\Gamma
_{\mathbf{\alpha}\mu\mathbf{\beta}}-(\partial_{\nu}\xi_{\mu})h_{\alpha}^{\mu
}\}\boldsymbol{\gamma}^{\alpha}\wedge\boldsymbol{\gamma}^{\beta}%
\psi_{\mathbf{\Xi}_{0}}\nonumber\\
&  =\mathfrak{d}_{\xi}\psi_{\mathbf{\Xi}_{0}}-\frac{1}{4}h_{\beta}^{\nu}%
\{\xi^{\mu}\partial_{\mu}h_{\alpha\nu}-(\partial_{\nu}\xi_{\mu})h_{\alpha
}^{\mu}\}\boldsymbol{\gamma}^{\alpha}\wedge\boldsymbol{\gamma}^{\beta}%
\psi_{\mathbf{\Xi}_{0}}\nonumber\\
&  -\frac{1}{4}\{\xi^{\rho}\Gamma_{\rho\mu\nu}\}dx^{\mu}\wedge dx^{\nu}%
\psi_{\mathbf{\Xi}_{0}}\nonumber\\
&  =\mathfrak{d}_{\xi}\psi_{\mathbf{\Xi}_{0}}-\frac{1}{4}h_{\beta}^{\nu}%
\{\xi^{\mu}\partial_{\mu}h_{\alpha\nu}-(\partial_{\nu}\xi_{\mu})h_{\alpha
}^{\mu}\}\boldsymbol{\gamma}^{\alpha}\wedge\boldsymbol{\gamma}^{\beta}%
\psi_{\mathbf{\Xi}_{0}} \label{liedhsf}%
\end{align}
where the last term in the second line of Eq.(\ref{liedhsf})\ is null because
$\Gamma_{\rho\mu\nu}=\Gamma_{\rho\nu\mu}$.

\subsection{Spinor Lie Derivative of Covariant Dirac Spinor Fields}

\begin{definition}
For completeness we recall that the spinor Lie derivative of a covariant Dirac
spinor field\emph{\footnote{The column spinor fields used in physical
textbooks.}} $\boldsymbol{\Psi}\in\sec P_{\mathrm{Spin}_{1,3}^{e}%
}(M,\mathtt{g})_{\times_{\mu}}\mathbb{C}^{4}$, with $\mu$ the $D^{(1/2.0)}%
\oplus D^{(0.1/2)}$ representation of \textrm{Sl}$(2,C)\simeq\mathrm{Spin}%
_{1,3}^{e}$ is%
\begin{align}
\overset{s}{\pounds }_{\boldsymbol{\xi}}\boldsymbol{\Psi}  &  =\frac{d}%
{dt}\left.  ^{s}\mathbf{\Psi}_{t}(x)\right\vert _{t=0}\nonumber\\
&  =\mathfrak{d}_{\boldsymbol{\xi}}\boldsymbol{\Psi}-\frac{1}{4}%
d\xi\mathbf{\Psi}+\frac{1}{4}\xi^{\kappa}\omega_{\alpha\kappa\beta
}\underline{\gamma}^{\alpha}\underline{\gamma}^{\beta} \label{LIE9a}%
\end{align}
with the $\underline{\gamma}^{\alpha\prime}$s being matrix representations of
the $\boldsymbol{\gamma}^{\alpha}$'s. Of course,
\begin{equation}
\overset{s}{\pounds }_{\boldsymbol{\xi}}\mathbf{\Psi}=\boldsymbol{D}%
_{\boldsymbol{\xi}\text{ }}\mathbf{\Psi}-\frac{1}{8}\left(  \boldsymbol{D}%
_{\alpha}\xi_{\beta}-\boldsymbol{D}_{\beta}\xi_{\alpha}\right)
\underline{\gamma}^{\alpha}\underline{\gamma}^{\beta}\mathbf{\Psi}, \label{LC}%
\end{equation}

\end{definition}

\section{The Spinor Lie derivative of the Metric Field}

We want to extend the spinor Lie derivative that we just defined for Clifford
and spinor fields to sections of the tensor bundle. The general case will not
be discusssed today. Here we give the

\begin{definition}
The spinor Lie derivative of the metric field $\boldsymbol{g}=\eta
_{\mathbf{\alpha\beta}}\boldsymbol{\gamma}^{\mathbf{\alpha}}\otimes
\boldsymbol{\gamma}^{\mathbf{\beta}}$ in the direction of the arbitrary vector
field $\boldsymbol{\xi}$ is%
\begin{equation}
\overset{s}{\pounds }_{\boldsymbol{\xi}}\boldsymbol{g}(x)=\lim_{t\rightarrow
0}\frac{\boldsymbol{\check{g}}_{t}(x)-\boldsymbol{g}(x)}{t}%
,~~~\boldsymbol{\check{g}}_{t}(x):=\eta_{\mathbf{\alpha\beta}}%
\boldsymbol{\check{\gamma}}_{t}^{\alpha}(x)\otimes\boldsymbol{\check{\gamma}%
}_{t}^{\beta}(x)
\end{equation}

\end{definition}

\begin{proposition}
\textbf{ }For any vector field $\boldsymbol{\xi}\overset{s}{\text{, it is
}\pounds }_{\boldsymbol{\xi}}\boldsymbol{g}=0.$
\end{proposition}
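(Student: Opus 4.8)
The plan is to prove the stronger statement that $\boldsymbol{\check{g}}_{t}=\boldsymbol{g}$ \emph{identically} for all $t$, not merely that its $t$-derivative vanishes at $t=0$; the Proposition then follows immediately from the definition of $\overset{s}{\pounds }_{\boldsymbol{\xi}}\boldsymbol{g}$ as a difference quotient. The entire argument rests on the single structural fact already recorded in Eq.(\ref{orthogonal}): the spinor image $\boldsymbol{\check{\gamma}}_{t}^{\alpha}$ of the orthonormal cobasis is obtained by conjugation with $u_{t}\in\mathrm{Spin}_{1,3}^{e}$, so that $\boldsymbol{\check{\gamma}}_{t}^{\alpha}=\Lambda_{t\beta}^{\alpha}\boldsymbol{\gamma}^{\beta}$ with $\Lambda_{t\beta}^{\alpha}(x)\in\mathrm{SO}_{1,3}^{e}$ for every $x$ (cf.\ Eq.(\ref{lieb14})). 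In other words, the spinor mapping $^{s}\mathrm{h}_{t}$ acts on the cotetrad by a genuine, point-dependent Lorentz rotation, regardless of whether $\boldsymbol{\xi}$ is Killing.

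First I would substitute $\boldsymbol{\check{\gamma}}_{t}^{\alpha}=\Lambda_{t\mu}^{\alpha}\boldsymbol{\gamma}^{\mu}$ into the definition of $\boldsymbol{\check{g}}_{t}$ and factor the scalar coefficients out of the tensor product, obtaining
\[
\boldsymbol{\check{g}}_{t}=\eta_{\alpha\beta}\boldsymbol{\check{\gamma}}_{t}^{\alpha}\otimes\boldsymbol{\check{\gamma}}_{t}^{\beta}=\bigl(\eta_{\alpha\beta}\Lambda_{t\mu}^{\alpha}\Lambda_{t\nu}^{\beta}\bigr)\,\boldsymbol{\gamma}^{\mu}\otimes\boldsymbol{\gamma}^{\nu}.
\]
Next I would invoke the defining relation of the pseudo-orthogonal group, namely that membership $\Lambda_{t}(x)\in\mathrm{SO}_{1,3}^{e}$ is precisely the statement $\eta_{\alpha\beta}\Lambda_{t\mu}^{\alpha}\Lambda_{t\nu}^{\beta}=\eta_{\mu\nu}$. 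Substituting this back and using Eq.(\ref{P3}) for the last equality yields
\[
\boldsymbol{\check{g}}_{t}=\eta_{\mu\nu}\,\boldsymbol{\gamma}^{\mu}\otimes\boldsymbol{\gamma}^{\nu}=\boldsymbol{g}.
\]
Since $\boldsymbol{\check{g}}_{t}-\boldsymbol{g}=0$ for all $t$, the limit defining $\overset{s}{\pounds }_{\boldsymbol{\xi}}\boldsymbol{g}$ is zero, which proves the claim.

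There is in truth no serious obstacle here, and that is exactly the conceptual point being made: the result holds for an \emph{arbitrary} $\boldsymbol{\xi}$, rather than only for Killing fields, because the orthonormality of $\boldsymbol{\check{\gamma}}_{t}^{\alpha}$ is built into the very construction of $^{s}\mathrm{h}_{t}$ through the spin-group element $u_{t}$, whereas the ordinary pullback $\mathrm{h}_{t}^{\ast}\boldsymbol{\gamma}^{\alpha}$ is orthonormal only in the Killing case (cf.\ the Remark following Eq.(\ref{lie3b})). The one point deserving care is to confirm that $u_{t}(x)=e^{-\frac{1}{4}t\mathbf{S}(\xi)(x)}$ genuinely lands in $\mathrm{Spin}_{1,3}^{e}$ for every $t$ and $x$, so that its adjoint action is a bona fide element of $\mathrm{SO}_{1,3}^{e}$ at all orders in $t$; this is guaranteed because $\mathbf{S}(\xi)=L(\xi)+d\xi$ is a sum of biforms, whose exponential lies in $\mathrm{Spin}_{1,3}^{e}$ in the sense of Eq.(\ref{lie2a}), and is asserted directly in Definition \ref{SPLIFT}. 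Granting this, the Lorentz character of $\Lambda_{t}$, and hence the exact invariance of $\boldsymbol{g}$, is automatic for all $t$ and not merely to first order.
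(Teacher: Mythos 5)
Your proposal is correct and follows essentially the same route as the paper's own proof: the paper's Eq.(\ref{lzerob}) likewise substitutes $\boldsymbol{\check{\gamma}}_{t}^{\alpha}=u_{t}^{-1}\boldsymbol{\gamma}^{\alpha}u_{t}=\Lambda_{t\kappa}^{\alpha}\boldsymbol{\gamma}^{\kappa}$ and uses the $\mathrm{SO}_{1,3}^{e}$ property of $\Lambda_{t}$ to conclude that the numerator is $\boldsymbol{g}(x)-\boldsymbol{g}(x)$ identically in $t$, so the limit vanishes. You merely make explicit the intermediate step $\eta_{\alpha\beta}\Lambda_{t\mu}^{\alpha}\Lambda_{t\nu}^{\beta}=\eta_{\mu\nu}$ that the paper leaves implicit.
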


\begin{proof}
Indeed we have%
\begin{gather}
\overset{s}{\pounds }_{\boldsymbol{\xi}}\boldsymbol{g}(x)=\lim_{t\rightarrow
0}\frac{\eta_{\alpha\beta}\boldsymbol{\check{\gamma}}_{t}^{\alpha}%
(x)\otimes\boldsymbol{\check{\gamma}}_{t}^{\beta}(x)-\eta_{\alpha\beta
}\boldsymbol{\gamma}_{t}^{\alpha}(x)\otimes\boldsymbol{\gamma}_{t}^{\beta}%
(x)}{t}\nonumber\\
=\lim_{t\rightarrow0}\frac{\eta_{\alpha\beta}(u_{t}^{-1}(x)\boldsymbol{\gamma
}^{\alpha}(x)u_{t}(x))\otimes(u_{t}^{-1}(x)\boldsymbol{\gamma}^{\beta}%
(x)u_{t}(x-\eta_{\alpha\beta}\boldsymbol{\gamma}_{t}^{\alpha}(x)\otimes
\boldsymbol{\gamma}_{t}^{\beta}(x)))}{t}\nonumber\\
=\lim_{t\rightarrow0}\frac{\eta_{\alpha\beta}\Lambda_{t\kappa}^{\alpha
}(x)\boldsymbol{\gamma}^{\kappa}(x)\otimes\Lambda_{t\iota}^{\beta
}(x)\boldsymbol{\gamma}^{\iota}(x)-\eta_{\alpha\beta}\boldsymbol{\gamma}%
_{t}^{\alpha}(x)\otimes\boldsymbol{\gamma}_{t}^{\beta}(x)))}{t}\nonumber\\
=\lim_{t\rightarrow0}\frac{\boldsymbol{g}(x)-\boldsymbol{g}(x)}{t}=0.
\label{lzerob}%
\end{gather}

\end{proof}

\begin{remark}
So, technically speaking $\overset{s}{\pounds }_{\boldsymbol{\xi}}$ is a
perfectly well defined operator acting in the Clifford and spin-Clifford
bundles having the nice properties exhibited above. It defines a derivation in
the Clifford bundle since it does pass to the quotient $\boldsymbol{\tau}%
M/I=$\ $\mathcal{C\ell(}M,\mathrm{g})$, where $I$ is the bilateral ideal
generated by elements of the form $(a\otimes b+b\otimes a-2\mathrm{g}(a,b))$
with $a,b\in\sec%
{\textstyle\bigwedge\nolimits^{1}}
T^{\ast}M$.
\end{remark}

Of course, when $\boldsymbol{\xi}$ is a Killing vector field we have
$\pounds _{\boldsymbol{\xi}}\mathcal{C}=\overset{s}{\pounds }%
\mathcal{_{\boldsymbol{\xi}}C}$

\begin{remark}
A definition of Lie derivative of tensor and spinor fields such that it always
annihilates $\boldsymbol{g}$ has be given by Bourguignon \& Gauduchon
\emph{\cite{bg1992}} using a very different method.
\end{remark}

\section{Other Definitions of Lie Derivatives of Spinor Fields.}

Once again we recall that the definition of the \emph{spinor} Lie derivative
of spinor fields generated by Killing vector fields has first given by
Lichnerowicz \cite{lichnerowicz} and taken valid (as a definition) for
arbitrary diffeomorphisms generated by arbitrary vector fields by Kosmann
\cite{kosmann1,kosmann2,kosmann3}. A \textquotedblleft
justification\textquotedblright\ of Kosmann's formula for the case where
$\boldsymbol{\xi}$ is a Killing vector field is given, e.g., in \cite{bt1987}.
There, it is imposed that the Lie derivative be a \emph{derivation} on the
Clifford bundle, by passing to the quotient the action of the Lie derivative
on $\boldsymbol{\tau}M/I=$\ $\mathcal{C\ell(}M,\mathrm{g})$ which necessarily
implies that $\boldsymbol{\xi}$ must be a Killing vector field. The Lie
derivative is then obtained using an analogy with the concept of covariant
derivative in the following way. First, one recall that the covariant
derivative of a Clifford field $\mathcal{C}\in\sec\mathcal{C\ell(}%
M,\mathrm{g})$ can be written as%
\begin{equation}
\boldsymbol{D}_{\boldsymbol{\xi}}\mathcal{C}=\mathfrak{d}_{\boldsymbol{\xi}%
}\mathcal{C+}\frac{1}{2}[\omega_{\xi},\mathcal{C}] \label{cdc}%
\end{equation}
and the covariant derivative of a representative\ in the Clifford bundle of a
DHSF can be written as%
\begin{equation}
\overset{s}{\boldsymbol{D}}_{\boldsymbol{\xi}}\psi_{\mathbf{\Xi}_{0}%
}=\mathfrak{d}_{\boldsymbol{\xi}}\psi_{\mathbf{\Xi}_{0}}+\frac{1}{2}%
\omega_{\boldsymbol{\xi}}\psi_{\mathbf{\Xi}_{0}}.
\end{equation}
Next, showing that for a Killing vector field $\boldsymbol{\xi}$ and $X\in\sec%
{\textstyle\bigwedge\nolimits^{1}}
T^{\ast}M\hookrightarrow\sec\mathcal{C\ell(}M,\mathrm{g})$ the standard Lie
derivative of $X$ is
\begin{align}
\pounds _{\boldsymbol{\xi}}X  &  =\mathfrak{d}_{\boldsymbol{\xi}}X+\frac{1}%
{4}[L(\xi)+d\xi,X]\\
&  =\boldsymbol{D}_{\boldsymbol{\xi}}X+\frac{1}{4}[d\xi,X]
\end{align}
it is \emph{postulated} that.%
\begin{align}
\pounds _{\boldsymbol{\xi}}\psi_{\mathbf{\Xi}_{0}}  &  :=\mathfrak{d}%
_{\boldsymbol{\xi}}\psi_{\mathbf{\Xi}_{0}}\mathcal{+}\frac{1}{4}(L(\xi
)+d\xi)\psi_{\mathbf{\Xi}_{0}}\\
&  =\overset{s}{\boldsymbol{D}}_{\boldsymbol{\xi}}\psi_{\mathbf{\Xi}_{0}%
}+\frac{1}{4}d\xi\psi_{\mathbf{\Xi}_{0}}%
\end{align}
which is the Kosmann's formula.

\begin{remark}
We emphasize that we get the above formulas with a very different
procedure\footnote{Without introducing at starting the use of Levi-Civita
connections, something that seems unjustificable if we want a meaningful
notion of Lie derivative.}, namely by finding a geometrical motivated
definition for the image of Clifford and spinor fields generated by
one-parameter groups of diffeomorphisms associated to an arbitrary smooth
vector field $\boldsymbol{\xi}$.
\end{remark}

\begin{remark}
\label{tensorcase}We also mention that\ in \emph{\cite{hv2000}} a Lie
derivative of a spinor field is also defined using analogy with the covariant
derivative and a formula is obtained similar to the formulas that we get for
$\overset{s}{\pounds }$ but with an extra term, which authors claim to be
necessary in order to have agreement with the Lie derivative of general tensor
fields with the ones obtained from the Lie derivative of general tensors
fields represented by a tensor product of spinor fields. As already mentioned
above this important issue will be discussed in another publication. Anyway,
we emphasize that our definition of $\overset{s}{\pounds }$ seems perfectly
consistent with the Clifford and spin-Clifford bundles formalism.
\end{remark}

\begin{remark}
It is worth to mention that an equation similar\emph{ Eq.(\ref{liedd})} has
also been obtained in \emph{\cite{F}} using the general concept of Lie
differentiation in the elegant theory of gauge natural bundles.\ The theory of
Lie differentiation in gauge natural bundles\ was originally introduced
by\emph{ \cite{eck}} and developed by Klok\'{a}\v{r} and
collaborators\emph{\footnote{See \cite{KMS}\emph{.}}}. It is reviewed with
emphasis in physical applications in \emph{\cite{F1,gm1,gm2}}. Authors
\emph{\cite{gm1,gm2}} claim that \emph{\cite{F}} succeeded in given a
geometrical meaning for the Kosmann definition, but the case is that what has
been done there was to postulated a particular lifting of the vector field
$\boldsymbol{\xi}\in\sec TM$ to the tangent bundle to $P_{\mathrm{Spin}%
_{1,3}^{e}}(M,\boldsymbol{g})$\ such that the definition of\ \emph{(}%
generalized\emph{) }Lie derivative of a spinor field results in Kosmann's
formula. This is a sophisticated way to get the same result we get using a
very simple and intuitive path.
\end{remark}

\begin{remark}
\label{grav}To obtain the Euler-Lagrange equations from the principle of
stationary action for a system consisting of a spinor field, the
electromagnetic and gravitational field\emph{\footnote{With the gravitational
field represented by the cotetrad fields $\{\boldsymbol{\gamma}^{\alpha}\}$.}}
implies in giving a clear definition of what we mean by the variation of these
fields. If the variations of the Clifford fields\ representing the
electromagnetic field and of the spinor fields representing matter if given by
$\overset{s}{\pounds }_{\boldsymbol{\xi}}$ we will have as a consequence that
the metric $\boldsymbol{g}$ defined by the cotetrad fields will have null
variation when the cotetrad fields are varied.
\end{remark}

\begin{remark}
\label{grav1}We recall here that when we represent the gravitational field by
the cotetrad fields $\boldsymbol{\gamma}^{\alpha}$ in a Riemann-Cartan theory
\emph{(}see \emph{\cite{rc2007})} we need, in order to obtain covariant
conservation laws for the matter and electromagnetic fields to make use of
vertical \emph{(}$\boldsymbol{\delta}_{\mathbf{v}}\boldsymbol{\gamma}^{\alpha
}=\Lambda_{\beta}^{\alpha}\boldsymbol{\gamma}^{\beta}$, with $\Lambda_{\beta
}^{\alpha}$ a local Lorentz rotation\emph{) }and horizontal \emph{(}%
$\boldsymbol{\delta}_{\mathbf{h}}\boldsymbol{\gamma}^{\alpha}=-$
$\pounds _{\boldsymbol{\varkappa}}\boldsymbol{\gamma}^{\alpha}$) variations.
However, existence of genuine \emph{(}not the covariant ones\emph{)
}conservation laws requires the existence of appropriated Killing vector
fields and consistency of the formalism requires in that case that%
\begin{equation}
\boldsymbol{\delta}_{\mathbf{v}}\boldsymbol{\gamma}^{\alpha}=-\Lambda_{\beta
}^{\alpha}\boldsymbol{\gamma}^{\beta}=-\pounds _{\boldsymbol{\varkappa}%
}\boldsymbol{\gamma}^{\alpha}.
\end{equation}
These constrained variations of the $\boldsymbol{\gamma}^{\alpha}$ have been
used in the theory of the gravitational field developed in \emph{\cite{rc2007}%
} and with improvements in \emph{\cite{rod2012}}. Now, taking into account
that it is $\overset{\boldsymbol{s}}{\pounds _{\boldsymbol{\xi}}%
}\boldsymbol{g}=0$, it is the case that $\overset{\boldsymbol{s}%
}{\pounds _{\boldsymbol{\xi}}}\boldsymbol{\gamma}^{\alpha}=\Lambda_{\beta
}^{\alpha}\boldsymbol{\gamma}^{\beta}$ we see that a consistent Lagrangian
formalism for fields represented by Clifford fields \emph{(}this including the
representatives in the Clifford bundle of Dirac-Hestenes spinor fields\emph{)
}may be based on taken variations of the fields $\phi$ entering the Lagrangian
density as being $\overset{\boldsymbol{s}}{\pounds _{\boldsymbol{\xi}}}\phi$.
This will be discussed in another publication.
\end{remark}

\section{Conclusions}

In this paper we claim to have given a geometrical motivated definition for a
Lie derivative of spinor fields in a Lorentzian structure $(M,\boldsymbol{g})$
by finding an appropriated image for Clifford and spinor fields under a
diffeomorphism generated by an arbitrary vector field $\boldsymbol{\xi}$ . We
called such operator the \emph{spinor Lie derivative}, denoted
$\overset{s}{\pounds }_{\boldsymbol{\xi}}$which is such that
$\overset{s}{\pounds }_{\boldsymbol{\xi}}\boldsymbol{g}=0$ for arbitrary
vector field $\boldsymbol{\xi}$ . We compared our definitions and results with
the many others appearing in literature on the subject.


\begin{thebibliography}{99}                                                                                               %


\bibitem {bt1987}{\footnotesize Benn I. M., and Tucker, R. W., \emph{An
Introduction to Spinors and Geometry with Applications in Physics}, Adam
Hilger, Bristol and New York, 1987}

\bibitem {bm}{\footnotesize Blaine Lawson, H., Jr., Michelsohn, M.-L. ,
\emph{Spin Geometry}, Princeton University Press, New Jersey.(1989).}

\bibitem {bg1992}{\footnotesize Bourguignon, J.-P. and Gauduchon, P.,
Spineurs, Operateurs de Dirac et Variations de M\'{e}triques, \emph{Comm.
Math. Phys}. \textbf{144}, 581-599 (1992)}

\bibitem {choquet}{\footnotesize Choquet-Bruhat, Y., DeWitt- Morette, C., and
Deillard-Bleick, M., \emph{Analysis, Manifolds and Physics} (revised edition),
North-Holland, Amsterdam , 1982.}

\bibitem {cru}{\footnotesize Crummeyrole, A., \emph{Orthogonal and Symplectic
Clifford Algebras}, Kluwer Acad. Publ.., Dordrecht, 1.990.}

\bibitem {dabrowski}{\footnotesize Dabrowski, L. and Percacci, R., Spinors and
Diffeomorphisms, \emph{Commun. Math. Phys}. \textbf{106}, 691-704 (1986).}

\bibitem {eck}{\footnotesize Eck, D. J, Gauge Natural Bundles and Gauge
Theories, \emph{Mem. Amer. Math. Soc}. \textbf{33}, number 247, (1981).}

\bibitem {F1}{\footnotesize Fatibene, L. and Francaviglia, M, \emph{Natural
and Gauge Natural Formalism for Classical Field Theories}, Kluwer Academic
Publ., Dordrecht, 2003.}

\bibitem {F}{\footnotesize Fatibene, L.,Ferraris, Francaviglia, M, and Godina,
M., A Geometric Definition of Lie Derivative for Spinor Fields, in \emph{Proc.
of the 6th International Conference on Differential Geometry and
Applications}, August 28th-September 1st 1995 (Brno, Czech Republic), Masaryk
University, Brno. [arXiv:gr-qc/960800 ]}

\bibitem {geroch}{\footnotesize Geroch, R., Spinor Structure of Space-Times in
General Relativity I, \emph{J. Math. Phys}. \textbf{9}, 1739-1744 (1968).}

\bibitem {gm1}{\footnotesize Godina,M., and Matteucci, P., The Lie Derivative
of Spinor Fields: Theory and Applications, \emph{Int. J. Geom. Methods Mod.
Phys}. \textbf{2}, 159-188 (2005)\ [ arXiv:math/0504366 [math.DG] ]}

\bibitem {gm2}{\footnotesize Godina,M., and Matteucci, P., \emph{Reductive
G-Structures and Lie Derivatives} [arXiv:math/0201235v2 [math-DG]]}

\bibitem {gursey}{\footnotesize G\"{u}rsey, F., Introduction to Group Theory,
in De Witt, C. and De Witt, B. (eds.), \emph{Relativity, Groups and Topology},
Gordon and Breach, New York, 1963.}

\bibitem {hv2000}{\footnotesize Hurley, D. J. and Vandyck, M. A.,
\emph{Geometry, Spinors and Applications}, Springer (published in association
with Praxis Publ., Chichester), 2000.}

\bibitem {KMS}{\footnotesize Kol\'{a}\v{r}, I., Michor, P. W.., Slov\'{a}k,
J.,\emph{ Natural Operations in Differential Geometry}, Springer-Verlag,
Berlin, 1993}

\bibitem {kn}{\footnotesize Kobayashi, S., and Nomizu, K., \emph{Foundations
of Differential Geometry}, vol I, J. Wiley-Interscience, New York, 1981.}

\bibitem {kosmann0}{\footnotesize Kosmann, Y. D\'{e}riv\'{e}es de Lie des
Spineurs, \emph{C. R. Acad. Sci. Paris,} \emph{S\'{e}rie A} \textbf{262},
289-292, (1966).}

\bibitem {kosmann1}{\footnotesize Kosmann, Y., D\'{e}riv\'{e}es de Lie des
Spineurs. Applications, \emph{C. R. Acad. Sci. Paris} \emph{, Paris, S\'{e}rie
A} \textbf{262}, 394-397 (1966)}

\bibitem {kosmann2}{\footnotesize Kosmann, Y.,Propri\'{e}t\'{e}s des
D\'{e}rivations de l'alg\`{e}bre des Tenseurs-Spineurs, \emph{C. R. Acad. Sc.
Paris, S\'{e}rie A} \textbf{264}, 355--358.(1967)}

\bibitem {kosmann3}{\footnotesize Kosmann, Y., D\'{e}riv\'{e}es de Lie des
Spineurs (th\`{e}se),\emph{ Ann. Mat. Pura Appl. IV}, 317-395, (1972).}

\bibitem {lawmi}{\footnotesize Lawson, H. B. Jr. and Michelson, M-L.,
\emph{Spin Geometry}, Princeton University Press, Princeton, 1989.}

\bibitem {lichnerowicz}{\footnotesize Lichnerowicz, A., Spineurs Harmoniques,
\emph{C. R. Acad. Sci. Paris} \emph{257}, 7-9 (1963).}

\bibitem {lounesto}{\footnotesize Lounesto, P., \emph{Clifford Algebras and
Spinors}, Cambridge University Press, Cambridge, 1997.}

\bibitem {lovrund}{\footnotesize Lovelok, D., and Rund, H., Tensors,
\emph{Differential Forms, and Variational Principles}, J. Wiley \& Sons, New
York, 1975.}

\bibitem {mr2004}{\footnotesize Mosna, R. A. and Rodrigues, W. A. Jr., The
Bundles of Algebraic and Dirac-Hestenes Spinor Fields,\emph{.J. Math. Phys}.
\textbf{45}, 2945-2966 (2004). [ arXiv:math-ph/0212033]}

\bibitem {pr2}{\footnotesize Penrose, R. and Rindler, W.,\emph{ Spinors and
Spacetime}, vol.2, Spinor and Twistor Methods in Spacetime Geometry, Cambridge
Univ. Press, Cambridge, 1986.}

\bibitem {mr02014}{\footnotesize Rodrigues, W. A. Jr., Algebraic and
Dirac-Hestenes Spinors and Spinor Fields,\emph{.J. Math. Phys}. \textbf{45},
2908-2994 (2004). [arXiv:math-ph/0212030]}

\bibitem {r2004}{\footnotesize Rodrigues, W. A. Jr., Algebraic and
Dirac-Hestenes Spinors and Spinor Fields,\emph{ J. Math. Phys}. \textbf{45} ,
2908-2994 (2004) [ arXiv:math-ph/0212030]. }

\bibitem {rc2007}{\footnotesize Rodrigues, W. A. Jr. and Capelas de Oliveira,
E., \emph{The Many Faces of Maxwell, Dirac and Einstein Equation},. A Clifford
Bundle Approach, Lecture Notes in Physics 722, Springer, Heidelberg, 2007. A
preliminary enlarged second edition may be found
at\ http://www.ime.unicamp.br/\symbol{126}walrod/mde100214.pdf}

\bibitem {rod2012}{\footnotesize Rodrigues, W. A. Jr., Nature of the
Gravitational Field and its Legitimate Energy-Momentum Tensor, \emph{Rep.
Math. Phys.} \textbf{69}, 265-279 (2012) [arXiv:1109.5272 [math-ph]]}

\bibitem {weinberg}{\footnotesize Weinberg, S., \emph{Gravitation and
Cosmology; Principles and Applications of the General Theory of Relativity},
J. Wiley and Sons,New York, 1972}
\end{thebibliography}
\end{document}